\documentclass{amsart}
\usepackage{amsaddr}
\usepackage[utf8]{inputenc}
\inputencoding{utf8}
\usepackage{booktabs}
\usepackage{amsmath}
\usepackage{amsthm}
\usepackage{amsfonts}
\usepackage{amssymb}
\usepackage{bbm}
\usepackage{float}
\usepackage{graphicx}
\usepackage{tikz}
\usepackage{natbib}
\usepackage{enumitem}
\usepackage{tikz-cd}
\usepackage{subcaption}
\usetikzlibrary{positioning}
\usepackage[ruled]{algorithm2e}
\usepackage[noend]{algpseudocode}
\usepackage[text={440pt,575pt},headheight=9pt,centering]{geometry}
\usepackage{mathtools}

\usepackage{hyperref}
\hypersetup{
	colorlinks=true,
	linkcolor=blue,
	filecolor=magenta,
	urlcolor=cyan,
	citecolor=blue
}
\usepackage[capitalize]{cleveref}

\allowdisplaybreaks

\newcommand{\x}{\mathbf{x}}
\newcommand{\y}{\mathbf{y}}

\newcommand{\p}{\mathbf{p}}
\newcommand{\w}{\mathbf{w}}
\newcommand{\1}{\mathbf{1}}
\newcommand{\0}{\mathbf{0}}
\newcommand{\A}{A}
\newcommand{\B}{B}
\newcommand{\X}{\mathbf{X}}
\newcommand{\Y}{\mathbf{Y}}
\newcommand{\W}{\mathbf{W}}

\newcommand{\RR}{\mathbb{R}}
\renewcommand{\S}{\mathcal{S}}

\newcommand{\EE}{\mathbb{E}}

\newcommand{\Det}{\text{Det}}

\newcommand{\diag}{\text{diag}}
\newcommand{\MTPtwo}{$ \text{MTP}_2 $ }

\newcommand{\EMTPtwo}{$ \text{EMTP}_2 $ }

\newcommand{\indep}{\perp \!\!\! \perp}

\DeclareMathOperator*{\tr}{\operatorname{tr}}

\DeclareMathOperator*{\argmin}{\operatorname{arg min}}

\DeclareMathOperator*{\sign}{\operatorname{sign}}
\DeclareMathOperator*{\rank}{\operatorname{rank}}

\newtheorem{thm}{Theorem}[section]

\newtheorem{cor}[thm]{Corollary}

\newtheorem{ex}{Example}

\newtheorem{rmk}{Remark}

\setlength{\arraycolsep}{3pt}

\newcommand{\HR}{H\"usler--Reiss}

\begin{document}
\title{Latent Gaussian and Hüsler--Reiss Graphical Models with Golazo Penalty} 
\author{Ignacio Echave-Sustaeta Rodríguez and Frank R\"ottger}
\email{i.echave.sustaeta.rodriguez@tue.nl}
\email{f.rottger@tue.nl}
\date{February 4, 2025}
\address{Department of Mathematics and
	 Computer Science, Eindhoven University of Technology, The Netherlands}

\begin{abstract}
The existence of latent variables in practical problems is common, for example when some variables are difficult or expensive to measure, or simply unknown. 
When latent variables are unaccounted for, structure learning for Gaussian graphical models can be blurred by additional correlation between the observed variables that is incurred by the latent variables.
A standard approach for this problem is a latent version of the graphical lasso that splits the inverse covariance matrix into a sparse and a low-rank part that are penalized separately.
This approach has recently been extended successfully to \HR{} graphical models, which can be considered as an analogue of Gaussian graphical models in extreme value statistics. 
In this paper we propose a generalization of structure learning for Gaussian and \HR{} graphical models via the flexible Golazo penalty. 
This allows us to introduce latent versions of for example the adaptive lasso, positive dependence constraints or predetermined sparsity patterns, and combinations of those.
We develop algorithms for both latent graphical models with the Golazo penalty and demonstrate them on simulated and real data.
\end{abstract}
\maketitle

\section{Introduction}

In many inference problems it is common to implicitly assume that all  variables of interest are being observed and measured. 
This is however often not the case, for various reasons. 
For example, it is possible that there exist unknown factors that influence the observed variables. 
Alternatively, there may be variables which are too expensive or difficult to measure.
When our interest is in structure learning for Gaussian graphical models, in particular in high-dimensional settings, a common approach is covariance estimation with the graphical lasso \citep{MB2006,YL2007} which can recover the zero pattern of the inverse covariance matrix $K=\Sigma^{-1}$.
In the presence of latent variables however, a potentially sparse structure might be inaccessible.
Let $O$ denote the indices of the observed and $H$ the indices of the latent (or hidden) variables of some Gaussian random vector $\X$.
The inverse covariance matrix of $\X_O$ is the Schur complement
$$(\Sigma_{OO})^{-1} = K_{OO} - K_{OH}(K_{HH})^{-1}K_{HO}.$$
Here, even when the complete model is sparse, the subtrahend can blur the sparsity pattern in $K_{OO}$.
In other words, the latent variables incur correlations in the observed system which can render attempts to estimate directly the dependence structure of the system unsuccessful. 

For this setting \citet{Chandrasekaran} proposed to model the inverse observed covariance matrix as the difference of a sparse matrix $A=K_{OO}$ and a low-rank matrix $B=K_{OH}(K_{HH})^{-1}K_{HO}$.
They penalize sparsity in $A$ ($\ell_1$ norm) and rank in $B$ (the nuclear norm is the trace for symmetric PSD matrices), resulting in the following optimization problem
\begin{align}
	\big(\widehat{\A},\widehat{\B}\big) &= \underset{\A,\B}{\argmin}-\ell(\A-\B;S_{OO}) + \lambda_n(\gamma \lVert \A \rVert_1 + \tr(\B)), \label{eq:opt_prob_intro}
\end{align}
where $\ell$ is the Gaussian log-likelihood, $S_{OO}$ is the observed sample covariance, $A$ is required to be positive definite and $B$ to be positive semidefinite, and $\lambda_n$ and $\gamma$ are non-negative scalars.

In many applications, for example in finance or climate science,
extreme events and their dependence are of highest relevance. For instance, we
might be interested in modeling extremal dependence in a financial crisis or a climate disaster.
The recent introduction of conditional independence and graphical models in extremes from threshold exceedances \citep{engelkehitz} allows for sparse models that can capture complex extremal dependence structures.
Within these models, the parametric family of \HR{} distributions is particularly convenient for inference, as it permits a parametric encoding of extremal conditional independence \citep{HES2022}. 
Such models can be parameterized by a signed graph Laplacian matrix $\Theta$, where a zero entry imposes conditional independence.
Similar to latent Gaussian graphical models, the presence of unobserved variables leads to a \HR{} parameter matrix that is a Schur complement
\[\Theta_{OO} - \Theta_{OH}(\Theta_{HH})^{-1}\Theta_{HO}.\]
This observation allowed \cite{engelke2024extremal} to extend the approach \eqref{eq:opt_prob_intro} of \citet{Chandrasekaran} to \HR{} graphical models.
Note that similar ideas have been explored in the context of Laplacian-constrained Gaussian graphical models \citep{LI202367}.

In structure learning for multivariate Gaussians some alternatives to the $\ell_1$-penalty as in the graphical lasso have been proposed in the literature, for example the adaptive lasso \citep{FFW2009} or positive dependence \citep{LUZ2019}.
Recently, \citet{piotrsteffen} introduced the Golazo penalty as a flexible generalization of many penalties.
The Golazo penalty includes not only the adaptive lasso and positive dependence, but also allows for graphical model constraints or asymmetric penalties, and combinations of those.

In this paper we propose to modify the approaches of \citet{Chandrasekaran} and \citet{engelke2024extremal} using the Golazo penalty to allow more flexible structure learning in latent Gaussian and \HR{} graphical models.
This yields two related convex optimization problems, which we tackle with an alternating direction method of multipliers (ADMM) algorithm \citep{Chang2020}. For the Laplacian-constrained version
of our code we modify the algorithm from \cite{LI202367}.
We demonstrate the application of our method on simulated and real data. The real data for the Gaussian is obtained from \cite{Chang2020} but with the original source
being \cite{Hughes2000}. The real data for the extreme application is taken from \cite{engelke2024extremal}.

The code for this paper is publicly available on Github at \url{https://github.com/iechave-tue/golazo-latent-ggm-hr}.

\subsection{Notation}
Let $\S^d_>$ be the collection of all symmetric positive definite $d\times d$-matrices and $\S^d_\ge$ the cone of symmetric positive semidefinite $d\times d$-matrices.
We abbreviate $M_{\mathcal I, \mathcal J}$ to $M_{\mathcal{IJ}}$ for some matrix $M$ and index sets $\mathcal{I}, \mathcal{J}$.

\section{Preliminaries}\label{sec:prelim}
\subsection{Gaussian Graphical Model}
Let $\X\sim N(\mu,\Sigma)$ be a multivariate Gaussian with mean $\mu\in\RR^d$ and covariance $\Sigma\in \S^d_> $.
We call $K=\Sigma^{-1}$ the concentration matrix.
Let $G=(V,E)$ be a simple undirected graph with vertices $V=\{1,\ldots,d\}$ and edge set $E\subset V\times V$.
A Gaussian graphical model with respect to $G$ is the collection of all multivariate Gaussian distributions that satisfy
\begin{align}
	\forall\; ij\not\in E \Longrightarrow K_{ij}=0. \label{eq:GGM}
\end{align}
As $K_{ij}=0$ is equivalent to the conditional independence $X_i \indep X_j \mid X_{V\setminus\{i,j\}}$, the graph $G$ implies conditional independence constraints on $\X$.
As a slight abuse of notation, we will refer to any multivariate Gaussian $\X$ that satisfies \eqref{eq:GGM} with respect to some graph $G$ as a Gaussian graphical model.

\begin{ex}
Let $d = 4$ and let $G$ be the graph in \Cref{fig:example}. The graph $G$ implies zeros in $K$ as follows:
\[K=\begin{pmatrix}
	K_{11}&K_{12}&0&K_{14}\\
	K_{12}&K_{22}&K_{23}&0\\
	0&K_{23}&K_{33}&K_{34}\\
	K_{14}&0&K_{34}&K_{44}\\
\end{pmatrix}.\]
This is equivalent to conditional independence statements $X_1 \indep X_3 \mid X_{\{2,4\}}$ and $X_2 \indep X_4 \mid X_{\{1,3\}}$.
\begin{figure}[H]\centering
    \begin{tikzpicture}[roundnode/.style={circle, draw=black!60, very thick,minimum size=4mm,scale=1}]
                \node[roundnode]      (u1)                     {$1$};
                \node[roundnode]    (u2)    [above right=1cm of u1] {$2$};
                \node[roundnode]      (u4)       [below right = 1cm of u1] {$4$};
                \node[roundnode]      (u3)       [below right=1cm of u2] {$3$};
                
                \draw (u3) -- (u2);
                \draw (u3) -- (u4);
                \draw (u1) -- (u4);
                \draw (u1) -- (u2);
                
              \end{tikzpicture}
              \caption{Example of a Gaussian graphical model.}\label{fig:example}
    \end{figure}
\end{ex}

\subsection{Multivariate Gaussians with Hidden Variables}

Let $\X$ be a multivariate Gaussian. 
We assume to observe only the subvector of variables $\X_O$ with $O\subset [d]:=\{1,\ldots,d\}$, and consider the remaining variables $H$ as hidden, where $[d] = O \cup H$ and $O \cap H = \emptyset$. 
Given i.i.d.~(centered) observations $\{\x_O^1,\ldots,\x_O^n\}$ of $\X_O\sim N(\mathbf{0},\Sigma_{OO})$, we define the sample covariance matrix $S_{OO}=\frac{1}{n}\sum_{i=1}^{n}\x_O^i (\x_O^i)^T$.
The inverse covariance (concentration) matrix of $\X_O$ can be expressed in terms of the full concentration matrix $K$, such that 
\begin{align}
	(\Sigma_{OO})^{-1} = K_{OO} - K_{OH}(K_{HH})^{-1}K_{HO}.\label{eq:Observed_concentration}
\end{align}
Here, the right hand side is the Schur complement $K/K_{HH}$.

If the complete vector $\X$ satisfies certain constraints, e.g.~a sparsity pattern in $K$ as imposed by a Gaussian graphical model, the subset of observed variables $\X_O$ would by default not show the same constraints.
For example, the inverse covariance matrix $(\Sigma_{OO})^{-1}$ of the observed variables would typically be a dense matrix even when $K$ is sparse.
We illustrate this behavior with an example:

\begin{ex}
Let $\X$ be a 5-variate Gaussian vector that is Markov to the graph in Figure~\ref{fig:claw}. Therefore its concentration matrix $K$ satisfies
\begin{align*}
	K&=\begin{pmatrix}
		K_{11} & 0&0&0&K_{15}\\
		0 & K_{22}&0&0&K_{25}\\
		0 & 0&K_{33}&0&K_{35}\\
		0 & 0&0&K_{44}&K_{45}\\
		K_{15} & K_{25}&K_{35}&K_{45}&K_{55}\\
	\end{pmatrix},
\end{align*}
\begin{figure}[H]\centering
\begin{tikzpicture}[roundnode/.style={circle, draw=black!60, very thick,minimum size=4mm,scale=1}]
            \node[roundnode]      (h)                     {$H$};
            \node[roundnode]    (o1)    [below left=2cm and 2cm of h] {$O_1$};
            \node[roundnode]      (o2)       [below left=2cm and 0.3cm of h] {$O_2$};
            \node[roundnode]      (o3)       [below right=2cm and 0.3cm of h] {$O_3$};
            \node[roundnode]      (o4)       [below right=2cm and 2cm of h] {$O_4$};
            
            \draw[dotted] (h) -- (o1);
            \draw[dotted] (h) -- (o2);
            \draw[dotted] (h) -- (o3);
            \draw[dotted] (h) -- (o4);
            \draw (3,-1) -- (-3,-1);

          \end{tikzpicture}\hspace{5mm}
          \begin{tikzpicture}[roundnode/.style={circle, draw=black!60, very thick,minimum size=4mm,scale=1}]
          	\node[roundnode]      (u1)                     {$O_1$};
          	\node[roundnode]    (u4)    [above right=1cm of u1] {$O_4$};
          	\node[roundnode]      (u3)       [below right = 1cm of u1] {$O_3$};
          	\node[roundnode]      (u2)       [below right=1cm of u4] {$O_2$};
          	
          	\draw (u1) -- (u3);
          	\draw (u3) -- (u2);
          	\draw (u3) -- (u4);
          	\draw (u1) -- (u4);
          	\draw (u4) -- (u2);
          	\draw (u1) -- (u2);
          \end{tikzpicture}
          \caption{Graph with four observed variables and one hidden (left), and completely connected graph with four observed variables (right).}\label{fig:claw}
\end{figure}Here, we can see that the hidden variable is connected with all of the observed variables, while there are no edges between observed variables.
The observed subset of variables $\X_O$ has a dense concentration matrix 
\begin{align*}
	(\Sigma_{OO})^{-1}&=\begin{pmatrix}
		K_{11}-\frac{K_{15}^{2}}{K_{55}} & -\frac{K_{15}K_{25}}{K_{55}}&-\frac{K_{15}K_{35}}{K_{55}}&-\frac{K_{15}K_{45}}{K_{55}}\\
		-\frac{K_{15}K_{25}}{K_{55}} & K_{22}-\frac{K_{25}^{2}}{K_{55}}&-\frac{K_{25}K_{35}}{K_{55}}&-\frac{K_{25}K_{45}}{K_{55}}\\
		-\frac{K_{15}K_{35}}{K_{55}} & -\frac{K_{25}K_{35}}{K_{55}}&K_{33}-\frac{K_{35}^{2}}{K_{55}}&-\frac{K_{35}K_{45}}{K_{55}}\\
		-\frac{K_{15}K_{45}}{K_{55}} & -\frac{K_{25}K_{45}}{K_{55}}&-\frac{K_{35}K_{45}}{K_{55}}&K_{44}-\frac{K_{45}^{2}}{K_{55}}\\
	\end{pmatrix},
\end{align*}
such that the corresponding graphical model is completely connected. 
\end{ex}

In this setting, we would be interested in being able to estimate $K_{OO}$, since it gives us information about the sparsity of the full model, and also to estimate $K_{OH}(K_{HH})^{-1}K_{HO}$, since this matrix tells us information about the hidden variables. For instance, if $\vert H \vert$ is small, then it will have low rank, since its rank is bounded above by $\vert H \vert$. In particular, we can use an estimate of this matrix to estimate the number of hidden variables via the rank.

To tackle this problem, \cite{Chandrasekaran} proposed to penalize the two components $K_{OO}$ and $K_{OH}(K_{HH})^{-1}K_{HO}$ that form $(\Sigma_{OO})^{-1}$ separately.
To facilitate notation, we define $\A:=K_{OO}$ and $\B:=K_{OH}(K_{HH})^{-1}K_{HO}$.
Let $\ell(K;S) = \log\det(K) - \tr(KS)$ be the Gaussian log-likelihood for some concentration matrix $K$ and sample covariance $S$ as seen in \cite{Chandrasekaran}. They introduce the following optimization problem: 
\begin{align}
	\big(\widehat{\A},\widehat{\B}\big) &= \underset{\A\in \S^{d}_{>},\B\in \S^{d}_{\ge}}{\argmin}-\ell(\A-\B;S_{OO}) + \lambda_n(\gamma \lVert \A \rVert_1 + \tr(\B)). \label{eq:latent_opt}
\end{align}
Here, the $\ell_1$-norm penalty $\lVert \A \rVert_1 $ promotes the assumed sparsity, and the trace penalty term $\tr(\B)$ the low-rank constraint for $\B$, allowing us to try to estimate this hidden variable component without prior knowledge about it.

\citet[Theorem~4.1]{Chandrasekaran} provide a theoretical analysis of the convergence of the estimation above. Under a number of assumptions related with the tangent spaces of the sparse and low-rank matrices (please refer to \cite{Chandrasekaran} for details), the signs in $\A$ and the rank of $\B$ are estimated accurately with high probability. 

\begin{thm}\label{THM:CHANDRA} \citep[Theorem~4.1]{Chandrasekaran} Let $\A$ and $\B$ denote the ground-truth sparse and low-rank components. Let 
	\begin{equation*} g_{\gamma}(\A,\B) := \max\{\frac{1}{\gamma}\lVert \A \rVert_{\infty}, \lVert B \rVert_2\} \end{equation*} 
and given a matrix $M$ and its tangent space $T(M)$, let
\begin{equation*}
	\xi(T(M)) := \underset{N \in T(M),\lVert N \rVert_2 \leq 1}{\max} \lVert N \rVert_{\infty}.
\end{equation*}
Under the assumptions of \citet[Proposition~3.3 and Theorem~4.1]{Chandrasekaran}, we have that the probability of having simultaneously
\begin{itemize}
	\item $\sign(\A^*) = \sign(\widehat{\A}).$
	\item $\rank(\B^*) = \rank(\widehat{\B}).$
	\item $g_{\gamma}(\A^*-\widehat{\A},\B^*-\widehat{\B}) \lesssim \frac{1}{\xi(T(\B^*))}\sqrt{\frac{|O|}{n}}$
\end{itemize}
is at least $1-2\exp(-|O|)$.
\end{thm}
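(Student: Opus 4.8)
The plan is to run the primal--dual witness (certificate) method that underlies statements of this kind for penalized maximum-likelihood $M$-estimators, following \citet{Chandrasekaran}. Write $R:=\A-\B$ for the combined concentration matrix, so that $-\ell(\A-\B;S_{OO})=-\log\det R+\tr(RS_{OO})$ is convex and smooth in $R$ with gradient $S_{OO}-R^{-1}$, and hence \eqref{eq:latent_opt} is convex in $(\A,\B)$. By the KKT conditions for the open constraint $\A\in\S^{d}_{>}$ and the cone constraint $\B\in\S^{d}_{\ge}$, the pair $(\widehat{\A},\widehat{\B})$ is optimal iff there is a subgradient $Z\in\partial\lVert\widehat{\A}\rVert_{1}$ (so $\lVert Z\rVert_{\infty}\le 1$, with equality of sign on $\mathrm{supp}(\widehat{\A})$) such that $S_{OO}-(\widehat{\A}-\widehat{\B})^{-1}+\lambda_n\gamma Z=0$, together with the complementary-slackness spectral condition $S_{OO}-(\widehat{\A}-\widehat{\B})^{-1}+\lambda_n I\succeq 0$ and $\langle S_{OO}-(\widehat{\A}-\widehat{\B})^{-1}+\lambda_n I,\widehat{\B}\rangle=0$. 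I would first record these conditions precisely.

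Second, I would construct a candidate on the ``correct model''. Let $\Omega:=T(\A^{*})$ be the subspace of symmetric matrices supported on $\mathrm{supp}(\A^{*})$, and let $T:=T(\B^{*})$ be the tangent space at $\B^{*}$ to the variety of symmetric PSD matrices of rank $\rank(\B^{*})$. Restrict \eqref{eq:latent_opt} to $\{\A-\A^{*}\in\Omega,\ \B-\B^{*}\in T\}$; near $(\A^{*},\B^{*})$ this is a smooth, strictly convex problem with a unique minimizer $(\bar{\A},\bar{\B})$, which I would analyze by linearizing the stationarity map about $(\A^{*},\B^{*})$. The relevant second-order operator is the Fisher information $\mathcal{I}^{*}=\Sigma_{OO}\otimes\Sigma_{OO}$; the transversality hypotheses of \citet[Prop.~3.3]{Chandrasekaran} ensure $\Omega\cap T=\{0\}$ and that $\mathcal{I}^{*}$ restricted to $\Omega\oplus T$ is boundedly invertible. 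Combining this with the deviation bound on $S_{OO}$ (Step four) via a standard fixed-point argument yields that $(\bar{\A},\bar{\B})$ exists, satisfies $\bar{\A}\succ0$, $\bar{\B}\succeq0$, $\rank(\bar{\B})=\rank(\B^{*})$, and $g_{\gamma}(\bar{\A}-\A^{*},\bar{\B}-\B^{*})\lesssim \xi(T(\B^{*}))^{-1}\sqrt{|O|/n}$ for the calibrated choice $\lambda_n\asymp\sqrt{|O|/n}$.

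Third --- and this is the main obstacle --- I would verify that $(\bar{\A},\bar{\B})$ satisfies the \emph{inactive} optimality conditions of the full program, so that it coincides with $(\widehat{\A},\widehat{\B})$ and recovers the correct sign and rank. One must show the constructed subgradient has $\lVert Z\rVert_{\infty}<1$ strictly off $\mathrm{supp}(\A^{*})$ --- which, together with a minimum-magnitude lower bound on the nonzero entries of $\A^{*}$ absorbed into the error bound of Step two, gives $\sign(\widehat{\A})=\sign(\A^{*})$ --- and that the spectral certificate for $\B$ has norm $<1$ strictly on $T^{\perp}$, giving $\rank(\widehat{\B})=\rank(\B^{*})$. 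This amounts to bounding the gradient residual $S_{OO}-(\bar{\A}-\bar{\B})^{-1}$ after projection onto $\Omega^{\perp}$ and onto $T^{\perp}$: one expands $R\mapsto R^{-1}$ to second order about $\Sigma_{OO}$, controls the linear term with the irrepresentability/transversality inequalities --- precisely the quantitative assumptions of \citet[Thm.~4.1]{Chandrasekaran}, phrased through $\xi(\Omega)$, $\xi(T)$, $\mu(T)$ and $\gamma$ --- and controls the quadratic remainder using the $g_{\gamma}$ estimate from Step two; the parameter $\gamma$ is tuned exactly so that both off-support inequalities can hold simultaneously.

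Finally, the probabilistic input, which is where $1-2\exp(-|O|)$ enters: the only randomness is $S_{OO}$, and a standard concentration bound for the sample covariance of $n$ i.i.d.\ centered Gaussians in $\RR^{|O|}$ gives $\lVert S_{OO}-\Sigma_{OO}\rVert_{2}\lesssim\sqrt{|O|/n}$, and likewise $\lVert S_{OO}-\Sigma_{OO}\rVert_{\infty}\lesssim\sqrt{|O|/n}$, on an event of probability at least $1-2\exp(-c|O|)$; rescaling the universal constants yields the stated bound. On this event every deterministic implication of Steps two and three holds, which delivers the three conclusions at once.
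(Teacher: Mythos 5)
The paper does not prove this statement: it is imported verbatim (up to notational adjustment) from \citet[Theorem~4.1]{Chandrasekaran}, with the hypotheses explicitly deferred to that reference, so there is no in-paper argument to compare yours against. Your outline does track the strategy of the original proof --- a primal--dual witness built from the restricted ``oracle'' program over $\Omega\oplus T$, transversality of the two tangent spaces, bounded invertibility of the Fisher information $\Sigma_{OO}\otimes\Sigma_{OO}$ on $\Omega\oplus T$, strict dual feasibility off the supports, and Gaussian covariance concentration supplying the $1-2\exp(-|O|)$ event --- so as a reconstruction of the cited result it is pointed in the right direction.

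Two caveats. First, as written it is a proof plan rather than a proof: every quantitatively hard step (the irrepresentability inequalities, the admissible range of $\gamma$ in terms of $\xi(T)$ and the incoherence of $\Omega$, the fixed-point argument establishing existence and rank-preservation of $(\bar{\A},\bar{\B})$) is delegated to ``the assumptions of Prop.~3.3 and Thm.~4.1,'' which makes the argument circular as a self-contained derivation; this is acceptable only because the statement itself is a citation. Second, there is a concrete sign error in your KKT conditions for the PSD block: differentiating $-\log\det(\A-\B)+\tr((\A-\B)S_{OO})+\lambda_n\tr(\B)$ in $\B$ gives $(\widehat{\A}-\widehat{\B})^{-1}-S_{OO}+\lambda_n I$, so the spectral condition is $(\widehat{\A}-\widehat{\B})^{-1}-S_{OO}+\lambda_n I\succeq 0$ (equivalently $S_{OO}-(\widehat{\A}-\widehat{\B})^{-1}\preceq\lambda_n I$), together with orthogonality to $\widehat{\B}$ --- not $S_{OO}-(\widehat{\A}-\widehat{\B})^{-1}+\lambda_n I\succeq 0$ as you wrote. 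Combined with the stationarity in $\A$ this certificate must equal $\lambda_n(I+\gamma Z)$ on the relevant subspace, and the flipped sign would break the complementary-slackness bookkeeping in your Step three.
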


This result does not give us exactly consistency, since although we have error bounds depending on the sample size $n$, this does not happen with probability $1$ as $n$ goes to infinity. Instead this only happens with probability at least $1-2\exp(-|O|)$, which is however close to one with large enough $|O|$.

\subsection{Laplacian-constrained Gaussian graphical model}

Let $G=(V,E)$ be an undirected graph with weighted adjacency matrix $Q$. The signed Laplacian matrix of $G$ is a symmetric $d\times d$ matrix $\Theta $ with 
\[\Theta_{ij}=\begin{cases}
	-Q_{ij}, & i\neq j,\\
	\sum_{k=1}^{d}Q_{ik}, & i=j.\\
\end{cases}\]
Let $\mathcal{H}^{d-1}:=\{\x\in\RR^d: \x^T\1=\0\}$ be the hyperplane that is orthogonal to the all-ones vector.
A Laplacian-constrained Gaussian graphical model (LCGGM) is a random vector $\W\sim N(\mu,\Theta^+)$, where $\mu\in\mathcal{H}^{d-1}$ and $\Theta^{+}$ denotes the Moore--Penrose pseudoinverse of a positive semidefinite signed graph Laplacian matrix $\Theta\in \S^{d}_{\ge}$.
The random vector $\W$ has a probability density with respect to the Lebesgue measure on $\mathcal{H}^{d-1}$ that is
\[f_{\W}(\w)=\sqrt{(2\pi)^{-(d-1)}\Det(\Theta)}\exp\left(-\frac{1}{2}(\w-\mu)^T\Theta (\w-\mu)\right),\] 
where $\Det$ denotes the pseudodeterminant, i.e.~the product of all nonzero eigenvalues~\citep{NEURIPS2020_4ef42b32}.
This is an exponential family with natural parameter $Q$ and sufficient statistic $T(\W)$ with $T(\W)_{ij}=-\frac{1}{2}(W_i-W_j-(\mu_i-\mu_j))^2$ \citep{RS2022}.
The mean parameter is $\EE(T(\W))=-\frac{1}{2}\Gamma$ where $\Gamma$ is a variogram matrix with $\Gamma_{ij}=\operatorname{Var}(W_i-W_j)$.
Mean parameter and natural parameter are linked via the Fiedler--Bapat identity
\begin{align}
\begin{pmatrix}
	-\frac{1}{2}\Gamma &\1\\
	\1^T&0\\
\end{pmatrix}^{-1}=\begin{pmatrix}
\Theta&\p\\
\p^T&\sigma^2\\
\end{pmatrix}, \label{eq:fiedler}
\end{align}
where $\p=\frac{\Gamma^{-1}\1}{\1^T\Gamma\1}$ and $\sigma^2=\frac{1}{2\1^T\Gamma^{-1}\1}$.
The matrix $\Gamma$ is conditionally negative definite, i.e.~$\Gamma\in \mathcal C^d = \{\Gamma \in [0,\infty)^{d\times d} : \Gamma = \Gamma^T, \diag(\Gamma) = \0, v^T\Gamma v < 0 \: \forall v \perp \1, v \neq \0\}$, which implies that $\Theta$ is positive semidefinite.

\subsection{Multivariate extremes and the Hüsler--Reiss distribution}\label{subsec:extremes}

Let $\X$ be a $d$-dimensional random vector.
When interest is in extremal dependence, one can assume that all margins of $\X$ are standardized.
Here, as in \cite{HES2022},
we consider exponential margins for $\X$, i.e.~for all $i\in [d]$
we have that $\mathbb P(X_i \leq x) = 1-\exp(-x)$ for nonnegative values of $x$.

The limit of threshold exceedances
\begin{align}
	\mathbb P(\Y \leq \y) &= \lim_{u\to\infty} \mathbb P(\X - u\1 \leq\y \mid \X \nleq u\1),\label{eq:MPD_limit}
\end{align}
if it exists, gives rise to a multivariate Pareto distribution with support $\mathcal L = \{x \in \mathbb R^d : x \nleq \0\}$~\citep{RT2006}.
Here, one says that $\X$ is in the domain of attraction of $\Y$.
The distribution of the random vector $\Y$ can be expressed as
\begin{align*}	
	\mathbb P(\Y \leq \y) &= \frac{\Lambda^c(\y \land \0)-\Lambda^c(\y)}{\Lambda^c(\0)},
\end{align*}
where $\Lambda^c(\y) := \Lambda([-\infty,\infty)^d \setminus [-\infty,\y])$ and $\Lambda$
is a measure on $[-\infty,\infty)^d \setminus \{-\boldsymbol{\infty}\}$, usually called the exponent measure \citep{engelke2024graphicalmodelsmultivariateextremes}.
This measure is finite on sets bounded away from $-\boldsymbol{\infty}$, which ensures
that the previous expression for the distribution of $\Y$ is well-defined.

If we assume that the exponent measure $\Lambda$ is absolutely continuous with
respect to the Lebesgue measure in $d$ dimensions, we can consider the so-called
exponent measure density $\lambda$, the Radon--Nikodym derivative of the exponent measure.

Note that the restriction of $\lambda$ to $\mathcal L$ is proportional to the density $f$ of
$\Y$, so it is possible to write the density as $f(\y) = \lambda(y) / \Lambda^c(\0)$.

\citet{engelkehitz} introduced an extremal notion of conditional independence for multivariate Pareto distributions via factorization of the exponent measure density.
Let $\lambda_A(\y_A)=\int_{\RR^{d-|A|}}\lambda(\y)d\y_{[d]\setminus A}$ be the marginal exponent measure density for some $A\subseteq[d]$.
It holds that $\lambda_A$ is the exponent measure density of the threshold exceedance limit of $\X_A$ in \eqref{eq:MPD_limit}, compare \citet{HES2022}.
For disjoint subsets $A,B,C\subseteq[d]$ we say that $\Y_A$ is conditionally independent of $\Y_B$ given $\Y_C$ (in short, $\Y_A\perp_e\Y_B|\Y_C$) when
\[\lambda_{A\cup B\cup C}(\y_{A\cup B\cup C})\lambda_{C}(\y_{C})=\lambda_{A\cup C}(\y_{A\cup C})\lambda_{B\cup C}(\y_{ B\cup C})\]
for all $\y \in \mathcal{L}$. 
For some undirected graph $G=([d],E)$, we then call a multivariate Pareto vector $\Y$ an extremal graphical model with respect to $G$ when
\[(i,j)\not\in E\; \Longrightarrow\; \Y_i\perp_e\Y_j|\Y_{[d]\setminus ij}.\]

In this paper, we will focus on the parametric family of H\"usler--Reiss distributions.
This parametric family of multivariate Pareto distributions is parameterized by a variogram matrix $\Gamma\in\mathcal{C}^{d}$.
Let $\W$ be an LCGGM with $\mu=(I-\frac{1}{d}\1\1^T)(-\frac{1}{2}\Gamma)\1$ and precision matrix $\Theta$. Then, the exponent measure density of a \HR{} has a representation
\begin{align}
	\lambda(\y)=c_{\Gamma}\exp(-\frac{1}{d}\y^T\1)f_\W(\y),\label{eq:HR_density}
\end{align}
where $c_\Gamma>0$ is a normalizing constant.
The marginal exponent measure density $\lambda_{A}(\y_A)$ is of the shape \eqref{eq:HR_density} with variogram $\Gamma_{A,A}\in \mathcal{C}^{|A|}$.
It holds that 
\[Y_i\perp_e Y_j|\Y_{[d]\setminus\{i,j\}}\Longleftrightarrow \Theta_{ij}=0.\]
Thus, imposing sparsity in $\Theta$ imposes sparsity in the corresponding \HR{} graphical model.

\subsection{Latent \HR{} graphical models}

\cite{engelke2024extremal} introduced a general latent \HR{} graphical model as follows.
Let some random vector $\X=(\X_O,\X_H)$ be in the domain of attraction of a \HR{} vector $\Y$ (compare \eqref{eq:MPD_limit}) with parameters
\begin{align*}
\Gamma&=\begin{pmatrix}
	\Gamma_{OO} &\Gamma_{OH}\\
	\Gamma_{HO} & \Gamma_{HH}\\
\end{pmatrix},&&\Theta=\begin{pmatrix}
		\Theta_{OO} &\Theta_{OH}\\
		\Theta_{HO} & \Theta_{HH}\\
	\end{pmatrix}.
\end{align*} 
Then, the random vector $\X_O$ is in the domain of attraction of a \HR{} distribution with variogram $\Gamma_{OO}$.
The precision matrix corresponding to $\Gamma_{OO}$ can be obtained for example via the Fiedler--Bapat identity \eqref{eq:fiedler} and calculates as the Schur complement
$$\widetilde{\Theta} = \Theta_{OO} - \Theta_{OH}(\Theta_{HH})^{-1}\Theta_{HO},$$
compare also \cite{engelke2024extremal}.
Now, if $\X$ is in the domain of attraction of a sparse \HR{} graphical model, the underlying parameter matrix $\Theta$ is sparse. 
However, if we only observe $\X_O$ for some $O\subset [d]$, the sparsity pattern in $\Theta_{OO}$ will be masked by the low-rank component $\Theta_{OH}(\Theta_{HH})^{-1}\Theta_{HO}$.
To tackle this problem one can employ similar strategies as for latent Gaussian graphical models. 
We decompose $\widetilde{\Theta} := A-B$ into a sparse
part $A$ (ideally, a matrix close to $\Theta_{OO})$ and a low-rank part $B$
(ideally close to $\Theta_{OH}(\Theta_{HH})^{-1}\Theta_{HO}$).

A natural next step would be a penalized maximum likelihood approach similar to \eqref{eq:latent_opt}, but the shape of the \HR{} log-likelihood complicates this approach.
As an alternative, \cite{engelke2024extremal} propose a surrogate maximum likelihood method based on previous work of \cite{HES2022,totalpositivity}, where the \HR{} log-likelihood gets replaced by a mean-zero LCGGM log-likelihood.

Assuming i.i.d.~observations of $\X_O$, the empirical variogram $\overline{\Gamma}_{OO}$ of \cite{engelkevolgushev} is a consistent estimator of $\Gamma_{OO}$, see Section~\ref{sec:application} for more details on its construction.
Using the LCGGM log-likelihood with $\overline{\Gamma}_{OO}$ as the summary statistic, this gives rise to the optimization problem
\[\argmin_{A,B \mid (A-B)\in\S^{d}_{\ge},B\in\S^{d}_{\ge}} -\log\Det(A-B)-\frac{1}{2}\tr((A-B)\overline{\Gamma}_{OO})+\lambda_n(\gamma||A||_{1}+\tr(B)), \]
for positive scalars $\lambda_n,\gamma$ and under the constraint that $\tilde{\Theta}=A-B$ is a positive semidefinite signed graph Laplacian.
The problem with this setting is that the pseudo-determinant is computationally inconvenient. As can
be seen in \cite{LI202367} (where they work in the context of LCGGMs,
suggesting the same approach but in the non-extreme context),
there are ways to rewrite this to get a better expression for computations.
We can write $\tilde{\Theta}=A-B = P\Xi P^T$, where $P \in \mathbb R^{d\times (d-1)}$
is the orthogonal complement of $\1$ and $\Xi \in \mathbb R^{(d-1)\times(d-1)}$ 
is non-singular. Now, we have that $\Det(A-B) = \det((A-B) + \1^T\1/d)=\det(\Xi)$ and $\tr((A-B)\overline{\Gamma}_{OO})=\tr(P\Xi P^T\overline{\Gamma}_{OO})=\tr(\Xi P^T\overline{\Gamma}_{OO}P)$, so we can 
optimize in terms of the matrix $\Xi$. In the case of \citet{engelke2024extremal}, they write
the problem in a different way, but it seems that when solving it with a convex solver
(in their provided code), such a structure is useful. The optimization problem is now as follows:
\begin{gather}
	(\hat{\Xi}_n,\hat{A}_n,\hat{B}_n) = \underset{\Xi\in\S^{d}_{\ge};A;B\in\S^{d}_{\ge}}{\argmin}-\log\det(\Xi) -\frac{1}{2} \tr(\Xi P^T\overline{\Gamma}_{OO}P) + \lambda_n(\gamma\lVert A \rVert_{1} + \tr(B))\label{eq:latentHRopt} \\
	\textrm{s.t. } P\Xi P^T = A - B.\nonumber
	\end{gather}
We choose to write the optimization problem in such a way since it helps us to decompose it in subproblems for solving
it using an ADMM-based algorithm.

\subsection{Golazo Constraints}\label{sec:Golazo}
\cite{piotrsteffen} introduce the Golazo penalty function: $$\lVert K \rVert_{LU} = \sum\limits_{i,j}\max\{L_{ij}K_{ij},U_{ij}K_{ij}\}.$$
Here, $L,U$ are matrices with entries in $\mathbb R \cup \{\infty, -\infty\}$ such that $L_{ij} \leq 0 \leq U_{ij}$ for all $i,j \in [d]$.
For a given sample covariance $S$, adding the Golazo penalty to the negative Gaussian log-likelihood gives rise to a flexible penalized estimation procedure
\begin{align*}
	\widehat{K}&=\argmin_{K\succeq 0} -\ell(K;S) + \lVert K \rVert_{LU},
\end{align*}
that generalizes the standard $\ell_1$-penalty as in the graphical lasso.
The same idea can be applied to the surrogate \HR{} maximum likelihood problem with an LCGGM log-likelihood with signed Laplacian parameter $\Theta$ and the empirical variogram $\overline{\Gamma}$ as summary statistic.
Among the possible constraints that can be enforced with the Golazo penalty are the following:
\begin{itemize}
	\item \textbf{Asymmetric adaptive graphical lasso}: Let $L_{ij} = l_{ij} <0$ and $U_{ij} = u_{ij}>0$ for all $i, j$. 
	With this, it is possible to penalize differently positive and negative entries. When $L_{ij}=-U_{ij}$ for all $i, j$ we are in the adaptive graphical lasso framework, see \cite{FFW2009} for details.
	If $-l_{ij}=u_{ij}=\lambda_n$ for all $i, j$ for some scalar $\lambda_n$, we have the usual symmetric graphical lasso.
	\item \textbf{Positive lasso}: If we only want to penalize positive entries, we set $L_{ij} = 0$ and $U_{ij} = \lambda_n >0$. 
	\item \textbf{\MTPtwo distributions}: A multivariate Gaussian is multivariate totally positive of order two ($\text{MTP}_2$) if and only if $K_{ij}\le 0$ for all $i\neq j$ \citep{LUZ2019}. Setting $L_{ij} = 0$ and $U_{ij} = \infty$ for all $i\neq j$ yields the Gaussian MLE under \MTPtwo when $\lVert K \rVert_{LU}$ penalizes the log-likelihood.
	For a \HR{} distribution, the constraint $\Theta_{ij}\le 0$ for all $i\neq j$ is equivalent to a notion of extremal \MTPtwo ($\text{EMTP}_2$) \citep{totalpositivity}.
	Setting $L_{ij} = 0$ and $U_{ij} = \infty$ for all $i\neq j$ yields the \HR{} surrogate MLE under \EMTPtwo when $\lVert \Theta \rVert_{LU}$ penalizes the surrogate log-likelihood.
	\item \textbf{Positivity and sparsity}: It is possible to constrain for (extremal) \MTPtwo and additionally enforce sparsity by setting $L_{ij} = -\lambda_n <0$ and $U_{ij} = \infty$ for all $i\neq j$. 
	\item \textbf{Gaussian/ \HR{} graphical models}: If by assumption / domain knowledge we wish to set the entry $K_{ij}$ or $\Theta_{ij}$ to $0$, it is possible to enforce this by setting $-L_{ij} = U_{ij} = \infty$, under the convention that $0 \cdot \pm\infty  = 0$.
\end{itemize}

\section{Learning latent Gaussian and \HR{} graphical models via Golazo constraints}
\subsection{Gaussian setting}
The main idea of this section is to introduce more flexible latent variable modeling for multivariate Gaussians. 
For this we propose to substitute the $\ell_1$-penalty in the latent optimization problem \eqref{eq:latent_opt} with the Golazo penalty.
This allows to incorporate custom constraints for the dependence structure of $A=K_{OO}$, see Section~\ref{sec:Golazo} for a list of examples.
We thus propose the following optimization problem:
\begin{equation}
	\big(\widehat{\A},\widehat{\B}\big) = \underset{\A\in \S^{d}_{>},\B\in \S^{d}_{\ge}}{\argmin}-\ell(\A-\B;S_{OO}) + \lVert \A \rVert_{LU} + \lambda_n \tr(\B). \label{eq:latent_opt_golazo}
\end{equation}
Note that here the regularization constants can be absorbed into the $L,U$ parameters of the Golazo penalty, so we don't include them explicitly. 
The log-likelihood $\ell(K;S)$ is a strictly concave function in $K$. The Golazo penalty is convex \citep{piotrsteffen}. Thus the optimization problem $\eqref{eq:latent_opt_golazo}$ is convex.

\citet{Chandrasekaran} provide an asymptotic result (see \Cref{THM:CHANDRA}) for the latent Gaussian graphical lasso.
 The following corollary of \Cref{THM:CHANDRA} extends their result to certain asymmetric Golazo constraints in which we change the $\ell_1$ penalty weight in the off-diagonal entries. 
 We believe that a similar result should hold for arbitrary Golazo constraints.
\begin{cor}\label{cor:asymptotic}
	Let $K$ be the true inverse covariance matrix and define $\A,\B$ as before. Let all the assumptions of \Cref{THM:CHANDRA} be satisfied, including 
	the choice of $\lambda_n$ and $\gamma$. Then, define the Golazo parameters $L,U$ such that
	\begin{itemize}
		\item if $A^*_{ij} > 0$, choose $L_{ij} \in [-\infty, -\lambda_n\gamma]$ and let $U_{ij}=\lambda_n\gamma$,
		\item if $A^*_{ij} < 0$, let $L_{ij}=-\lambda_n\gamma$ and choose $U_{ij} \in [\lambda_n\gamma, \infty].$
		\item if $A^*_{ij} = 0$, choose $L_{ij} \in [-\infty, -\lambda_n\gamma]$ and $U_{ij} \in [\lambda_n\gamma, \infty].$ 
	\end{itemize}
	In this case we recover the correct sign pattern of $A^*$ and rank of $B^*$ with probability greater than $1-2\exp(-|O|)$.
\end{cor}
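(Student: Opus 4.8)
The plan is to reduce Corollary~\ref{cor:asymptotic} to Theorem~\ref{THM:CHANDRA} by showing that, under the stated choices of $L$ and $U$, the optimization problem \eqref{eq:latent_opt_golazo} has \emph{the same minimizer} as the problem \eqref{eq:latent_opt} of \citet{Chandrasekaran} with regularization parameters $\lambda_n,\gamma$. The key observation is that both problems share the same smooth term $-\ell(A-B;S_{OO})+\lambda_n\tr(B)$, so they differ only in how the off-diagonal entries of $A$ are penalized: the $\ell_1$ problem uses $\lambda_n\gamma\lvert A_{ij}\rvert = \max\{-\lambda_n\gamma A_{ij},\,\lambda_n\gamma A_{ij}\}$, while the Golazo problem uses $\max\{L_{ij}A_{ij},\,U_{ij}A_{ij}\}$. (One should first note that on the diagonal the two penalties can be taken to agree; the Golazo penalty on diagonal entries is conventionally $0$ or handled identically, so only the off-diagonal comparison matters.)

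The heart of the argument is the following local comparison. Let $\widehat A$ denote the (unique, by strict convexity of the smooth part in $A-B$ together with the constraint $A\in\S^d_{>}$) minimizer of the $\ell_1$ problem \eqref{eq:latent_opt}. For each off-diagonal pair $(i,j)$ I would split into the three cases of the corollary and check that $\widehat A$ together with $\widehat B$ still satisfies the KKT / subgradient optimality conditions for the Golazo problem. Concretely: the subdifferential of $t\mapsto\max\{L_{ij}t,U_{ij}t\}$ at a point $t$ equals $\{U_{ij}\}$ if $t>0$, $\{L_{ij}\}$ if $t<0$, and $[L_{ij},U_{ij}]$ if $t=0$; the subdifferential of $t\mapsto\lambda_n\gamma\lvert t\rvert$ is $\{\lambda_n\gamma\}$, $\{-\lambda_n\gamma\}$, $[-\lambda_n\gamma,\lambda_n\gamma]$ respectively. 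When $\widehat A_{ij}>0$ (so in particular $A^*_{ij}$ has the same sign, by the sign-consistency conclusion of Theorem~\ref{THM:CHANDRA}, which holds for the $\ell_1$ solution) the $\ell_1$ optimality condition pins the relevant gradient coordinate to $\lambda_n\gamma$; since we chose $U_{ij}=\lambda_n\gamma$, the Golazo subgradient at $\widehat A_{ij}>0$ is also exactly $\lambda_n\gamma$, so the same stationarity equation holds. The case $\widehat A_{ij}<0$ is symmetric, using $L_{ij}=-\lambda_n\gamma$. When $\widehat A_{ij}=0$, the $\ell_1$ condition says the gradient coordinate lies in $[-\lambda_n\gamma,\lambda_n\gamma]$; since $L_{ij}\le-\lambda_n\gamma$ and $U_{ij}\ge\lambda_n\gamma$, we have $[-\lambda_n\gamma,\lambda_n\gamma]\subseteq[L_{ij},U_{ij}]$, so the Golazo subgradient inclusion is satisfied as well. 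Hence $(\widehat A,\widehat B)$ is optimal (and, by convexity, the unique optimum) of \eqref{eq:latent_opt_golazo}, so the Golazo estimator coincides with the Chandrasekaran estimator and inherits all three high-probability conclusions verbatim.

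A subtle point I would want to be careful about is the logical order: Theorem~\ref{THM:CHANDRA} asserts that $\sign(\widehat A)=\sign(A^*)$ only on the event of probability at least $1-2\exp(-|O|)$, so the case-split above should be phrased on that event — i.e. we condition on the good event, on which $\widehat A$ has exactly the sign pattern of $A^*$, and then the choices of $L_{ij},U_{ij}$ (which are indexed by the \emph{true} sign of $A^*_{ij}$) line up with the \emph{realized} sign of $\widehat A_{ij}$. The main obstacle is therefore not any hard estimate but rather making the "same minimizer" claim fully rigorous: one must confirm that the feasible sets coincide ($A\in\S^d_{>}$, $B\in\S^d_{\ge}$ in both), that strict convexity indeed gives uniqueness so that matching first-order conditions forces equality of minimizers, and that the Golazo penalty with possibly infinite $L_{ij}$ or $U_{ij}$ entries causes no trouble here (it does not, since the chosen entries are only infinite in directions where $\widehat A_{ij}=0$, where the penalty contributes $0$ under the convention $0\cdot(\pm\infty)=0$ and the subgradient set is a half-line that still contains $[-\lambda_n\gamma,\lambda_n\gamma]$). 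Once these bookkeeping points are settled, the corollary follows immediately.
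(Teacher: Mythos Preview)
Your proposal is correct and follows essentially the same route as the paper: both argue that, on the high-probability event of Theorem~\ref{THM:CHANDRA}, the $\ell_1$ minimizer $(\widehat A,\widehat B)$ remains optimal for the Golazo problem, so the sign and rank conclusions carry over verbatim. The only cosmetic difference is that the paper phrases this via a direct objective comparison (the Golazo penalty dominates the $\ell_1$ penalty entrywise, with equality at points whose signs match those of $A^*$, hence at $\widehat A$), whereas you phrase it dually via subdifferential inclusion; these are two sides of the same coin.
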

\begin{proof}
	The original statement (when $U_{ij} = -L_{ij} = \lambda_n\gamma$) tells us that with probability larger than 
	$1-2\exp(-|O|)$, the sign of the estimate $\widehat{A}$ is equal to that of $A^*$, and the rank of $\widehat{B}$
	is the same as that of $B^*$.  This means that with that probability,
	the optimal point of the problem in \Cref{eq:latent_opt_golazo} has the correct signs and rank. 
	
	In general, if we add a larger positive
	penalty to any non-optimal points, the optimal point will stay the same. Here, if $A^*_{ij}>0$ is positive, we can increase
	the penalty on the negative values by making $L_{ij}$ smaller. Similarly, if $A^*_{ij} < 0$, we can increase the
	penalty on positive points by increasing $U_{ij}$. Finally, if $A^*_{ij} = 0$, then we can increase both penalties
	simultaneously while maintaining the same optimal point. This proves that the statement about sign and rank is still
	satisfied.
\end{proof}
	Corollary~\ref{cor:asymptotic} implies that any sign constraints (such as enforcing positivity in an entry, or
	enforcing sparsity) can be added without losing guarantees if such an assumption is accurate in the specific practical setting.
	A positive entry in the matrix is enforced by fixing the corresponding entry of $L$ to $-\infty$, a negative entry
	is enforced by fixing the corresponding entry of $U$ to $\infty$, and a zero is enforced by doing both simultaneously.
	Thus, Corollary~\ref{cor:asymptotic} extends the result of \cite{Chandrasekaran} to any setting where the ground truth satisfies such constraints.

\subsection{Learning Laplacian-constrained Gaussian Graphical Models under Golazo Constraints}
In this section we propose to generalize the approach of \citet{engelke2024extremal} via the Golazo penalty.
Assume the setting of $\eqref{eq:latentHRopt}$ for a given empirical variogram $\overline{\Gamma}_{OO}$.
As for latent Gaussian graphical models, the Golazo penalty allows more flexible constraints on $A=\Theta_{OO}$ (see Section~\ref{sec:Golazo}) than the original $\ell_1$ penalty.
This gives rise to the optimization problem
\begin{gather*}
(\hat{\Xi}_n,\hat{A}_n,\hat{B}_n) = \underset{\Xi\in\S^{d}_{\ge};A;B\in\S^{d}_{\ge}}{\argmin}-\log\det(\Xi) -\frac{1}{2} \tr(\Xi P^T\overline{\Gamma}_{OO}P) + \lVert A \rVert_{LU} + \lambda_n\tr(B) \\
\textrm{s.t. } P\Xi P^T = A - B,
\end{gather*}
where $\lambda_n$ is a positive scalar.
It is possible to write this optimization problem in similar but slightly different ways, the one we show here will be useful for the two-block ADMM algorithm in the following section that we will use to solve the problem, in which this structure in terms of 3 blocks of variables is natural. 

\begin{rmk}
	As we mentioned above, \citet{LI202367} work with the same optimization problem as \citet{engelke2024extremal},
	only that instead of using $-\overline{\Gamma}_{OO}/2$ as their data input, they use a sample covariance matrix $S_{OO}$.
	Furthermore, they only consider Laplacian matrices, i.e.~only positive edge weights.
	Thus, our generalized approach can also be applied in their setting. In particular, the positivity constraints of Laplacian matrices can be captured easily by Golazo.
	However, note that although \HR{} models and LCGGM models are profoundly related as illustrated by the shape of the \HR{} exponent measure density \eqref{eq:HR_density}, the marginal $\W_O$ will not be parameterized by the signed Laplacian $\tilde{\Theta}$ as $\W_O$ is not degenerate for any proper subset $O\subset [d]$.
\end{rmk}

\section{ADMM Algorithm}\label{sec:admm}
\subsection{Gaussian setting}
To tackle the convex optimization problem \eqref{eq:latent_opt_golazo} it is possible to use a general convex solver. For this paper we will employ a multi-block ADMM algorithm that is often used for solving similar problems in the machine learning context, given that this methods can give better time performance than a general convex solver by taking advantage of separable problems in terms of the blocks of variables. Here, we are adapting the algorithm studied in \cite{Chang2020}, which is a good reference for the details on the general idea of the algorithm. 
We rewrite \eqref{eq:latent_opt_golazo} in terms of three blocks of variables as follows:
\begin{gather}
	(\widehat{M},\widehat{\A},\widehat{\B}) = \underset{M,\A\in \S^{d}_{>},\B\in \S^{d}_{\ge}}{\argmin}-\ell(M;S_{OO}) + \lVert \A \rVert_{LU} + \lambda_n\tr(\B)\quad \textrm{  s.t. } M = \A - \B. \label{eq:opt_prob_ADMM}
\end{gather}
We define the augmented Lagrangian of the optimization problem $$\mathcal L_{\sigma}(M,\A,\B,\Lambda) := -\ell(M;S_{OO}) + \lVert A \rVert_{LU} + \lambda_n\tr(\B) -\langle \Lambda, M - \A + \B\rangle + \frac{\sigma}{2}\rVert M - \A + \B \rVert^2,$$
where $\Lambda \in \mathbb R^{d\times d}$ are the Lagrange multipliers. This algorithm used this augmented Lagrangian
since the additional penalty helps enforce the constraints between the blocks of variables. Here, $\sigma$ denotes the
hyperparameter that tunes how strongly the constraints between the blocks of variables are enforced.
The $k+1$ iteration of the algorithm will be as follows:
\begin{equation*}
\begin{cases}
M^{k+1} := \underset{M \in \mathbb R^{d\times d}}{\argmin} \: \mathcal L_{\sigma}(M,\A^k, \B^k, \Lambda^k)+\frac{\rho\sigma}{2}\lVert M-M^k\rVert^2,\\
\Lambda^{k+\frac{1}{2}}:= \Lambda^k - \alpha\sigma(M^{k+1} - \A^k + \B^k),\\
\A^{k+1} := \underset{\A \in \mathbb R^{d\times d}}{\argmin} \: \lVert \A \rVert_{LU} + \frac{\tau r_1}{2}\Big\lVert \A - \A^k + \frac{\Lambda^{k+\frac{1}{2}}}{\tau r_1}\Big\rVert^2, \\
\B^{k+1} := \underset{\B \in \mathbb \B^{d\times d}\; \B\succeq 0}{\argmin} \: \lambda_n \tr(\B) + \frac{\tau r_2}{2}\Big\lVert \B - \B^k + \frac{\Lambda^{k+\frac{1}{2}}}{\tau r_2}\Big\rVert^2, \\
\Lambda^{k+1} := \Lambda^{k+\frac{1}{2}} + \sigma(\A^{k+1} - \A^k) - \sigma(\B^{k+1} - \B^k).
\end{cases}
\end{equation*}
The Lagrange multiplier is updated two times in each iteration given the multi-block nature of the problem.
For details about the procedure see \cite{Bai2017}. As shown in \cite{Chang2020}, the conditions 
$\tau \in (\frac{2+\alpha}{2}, +\infty), \rho \in [0,+\infty), r_1 > \sigma, r_2 > \sigma$ are sufficient conditions 
for convergence. Here $\alpha$ is the step size of the half-update of the Lagrange multiplier. It is suggested by them to fix for practical reasons $\rho = 0$, 
$\tau = \varsigma\frac{2+\alpha}{2}$ and $r_1 = r_2 = \varsigma\sigma$, where $\varsigma > 1$. Here, $\rho$ is a parameter
than can help speed up convergence of the method but we do not worry about this in our paper.

The three subproblems that we have after the considerations about the parameters have simple closed form solutions, which we briefly summarize here. Firstly, the subproblem for $M^{k+1}$ has a first order condition 
$$S_{OO} - M^{-1} + \sigma \big(M - \A^k + \B^k - \frac{\Lambda^k}{\sigma}\big) + \rho\sigma(M-M^k) = 0.$$
By multiplying by $M$, this is converted into a quadratic equation on $M$: $$(\rho+1)\sigma M^2 + \big(S_{OO} + \sigma(\B^k - \A^k) - \Lambda^k - \rho\sigma M^k\big)M - I = 0.$$
If we consider the eigendecomposition 
$C \diag(\mathbf{v}) C^T = S_{OO} + \sigma(B^k - A^k) - \Lambda^k - \rho\sigma M^k$
and define a new vector of eigenvalues $\x$ such that $$x_i := \frac{-v_i + \sqrt{v_i^2 + 4(\rho + 1)\sigma}}{2(\rho+1)\sigma},$$ then the closed form solution to the problem is $M^{k+1} = C \diag(\x) C^T.$

For the second subproblem, let $\0$ denote the zero matrix and let $\max$ denote here the entry-wise maximum. Then, the solution is $$A^{k+1} = \min\Bigg\{A^k - \frac{\Lambda^{k+\frac{1}{2}}+L}{\tau r_1},\0\Bigg\} + \max\Bigg\{A^k - \frac{\Lambda^{k+\frac{1}{2}} - U}{\tau r_1},\0\Bigg\}.$$

Finally, the third subproblem also has a simple closed form solution. Consider the eigendecomposition $D \diag(\beta) D^T = B^k + \frac{\Lambda^{k+\frac{1}{2}}-\lambda_n I}{\tau r_2}.$ Then, the closed form solution is given by $B^{k+1} = D \diag(\max(\beta,\0))D^T,$ where again the $\max$ is taken entry-wise.

Therefore, it is straightforward to solve this problem iteratively. Let $N$ denote the maximum number of iterations that we allow in a practical setting, and let $\epsilon_1, \epsilon_2 \in \mathbb R_{\geq 0}$ be parameters such that the algorithm stops if we have that both of the following conditions are satisfied:

\begin{align*}
\text{RelChg} :&= \max\Bigg\{ \frac{\lVert M^{k+1} - M^k \rVert_F}{1+\lVert M^k \rVert_F}, \frac{\lVert A^{k+1} - A^k \rVert_F}{1+\lVert A^k \rVert_F}, \frac{\lVert B^{k+1} - B^k \rVert_F}{1+\lVert B^k \rVert_F} \Bigg\} < \epsilon_1 \\
\text{IER} :&= \lVert M^k - A^k + B^k \rVert_F < \epsilon_2.
\end{align*}

The algorithm stops only after the maximum number of iterations are performed or when the previous criterion is satisfied. We show pseudocode for the algorithm in \Cref{ALG:GGM}.

\begin{algorithm} 
\caption{Multi-block ADMM for GGM estimation}\label{ALG:GGM}
 \DontPrintSemicolon
 \LinesNumbered
\KwIn{$S_{OO}, L, U, P, \{\sigma, \alpha, r_1, r_2, \tau, \lambda_n, \rho\}$, \{$\epsilon_1, \epsilon_2$\}, $N$, $k=0$}
\KwOut{$\hat{M}_n,\hat{A}_n,\hat{B}_n$}
Starting point: $M^0 \leftarrow I, A^0 \leftarrow I, B^0 \leftarrow \0$\;
\While{$k<N$ and (RelChg $\geq \epsilon_1$ or IER $\geq \epsilon_2$)}{Compute eigendecomposition $C \diag(\alpha) C^T$ of $S_{OO} + \sigma(B^k - A^k) - \Lambda^k - \rho\sigma M^k$\;
$x_i \leftarrow \frac{-\alpha_i + \sqrt{\alpha_i^2 + 4(\rho + 1)\sigma}}{2(\rho+1)\sigma}$\;
$M^{k+1} \leftarrow C \diag(\x) C^T$ \;
$\Lambda^{k+\frac{1}{2}} = \Lambda^k - \alpha\sigma(M^{k+1}- A^k + B^k)$\;
$A^{k+1} \leftarrow \min\Bigg\{A^k - \frac{\Lambda^{k+\frac{1}{2}}+L}{\tau r_1},\0\Bigg\} + \max\Bigg\{A^k - \frac{\Lambda^{k+\frac{1}{2}} - U}{\tau r_1},\0\Bigg\}$\;
Compute eigendecomposition $D \diag(\beta) D^T$ of $B^k + \frac{\Lambda^{k+\frac{1}{2}}-\lambda_n I}{\tau r_2}$\;
$B^{k+1} \leftarrow D \diag(\max(\beta,\0))D^T$\;
$\Lambda^{k+1} \leftarrow \Lambda^{k+\frac{1}{2}} + \sigma(A^{k+1} - A^k) - \sigma(B^{k+1} - B^k)$\;
$k=k+1$\;
}
\Return $\hat{M}_n \leftarrow M^k,\hat{A}_n \leftarrow A^k,\hat{B}_n \leftarrow B^k$\;
\end{algorithm}

\subsection{\HR{} setting}
In the \HR{} setting the algorithm is analogous to the previous one, with a slight 
modification in the first update. This happens because we are now optimizing over $\Xi$, although 
the subproblem is analogue to the one that we have for $M$. It can be seen in \citet{LI202367}
and in Algorithm~\ref{ALG:LAP} how to take into account this small difference, where first we solve
the subproblem for $\Xi$ (thus obtaining $\Xi^{k+1}$) and then we obtain the new
estimate for $\Theta$ by simply setting $\Theta^{k+1} := P\Xi^{k+1}P^T$.

In the pseudocode in 
\Cref{ALG:LAP}, the data input $S$ is meant to be $S_{OO}$ if we would like to estimate
a signed LCGGM, and $-\overline{\Gamma}_{OO}/2$ if we would like to 
estimate an HR extremal graphical model. 
We can use the same 
algorithm since the only difference in the optimization problem is this one. Such 
an algorithm was already considered by \citet{LI202367}. In their case they solve the 
problem for latent Laplacian-constrained Gaussian Graphical Models with the $\ell_1$ penalty,
so our approach here generalizes the constraints and the possibility to include extremes as in
\cite{engelke2024extremal}.

In addition to this, they solve a slightly more constrained problem,
where the signed Laplacians that appear are instead Laplacians (that is, the off-diagonal
entries can only be smaller or equal to 0). With our approach, we generalize this to allow any positive semidefinite
signed Laplacian.

\begin{algorithm} 
\caption{Multi-block ADMM for Laplacian graphical model estimation}\label{ALG:LAP}
 \DontPrintSemicolon
 \LinesNumbered
\KwIn{$S, L, U, P, \{\sigma, \alpha, r_1, r_2, \tau, \lambda_n, \rho\}$, \{$\epsilon_1, \epsilon_2$\}, $K$, $k=0$}
\KwOut{$\hat{\Theta}_n,\hat{A}_n,\hat{B}_n$}
Starting point: $\Theta^0 \leftarrow I, \Xi^0 = P^TP, A^0 \leftarrow I, B^0 \leftarrow \0$\;
\While{$k<K$ and (RelChg $\geq \epsilon_1$ or IER $\geq \epsilon_2$)}{Compute eigendecomposition $C \diag(\alpha) C^T$ of $P^T S P + \sigma P^T(A^k - B^k)P - P^T\Lambda^kP - \rho\sigma \Xi^k$\;
$x_i \leftarrow \frac{-\alpha_i + \sqrt{\alpha_i^2 + 4(\rho + 1)\sigma}}{2(\rho+1)\sigma}$\;
$\Xi^{k+1} \leftarrow C \diag(\x) C^T$ \;
$\Theta^{k+1} = P\Xi^{k+1}P^T$\;
$\Lambda^{k+\frac{1}{2}} \leftarrow \Lambda^k - \alpha\sigma(\Theta^{k+1}- A^k + B^k)$\;
$A^{k+1} \leftarrow \min\Bigg\{A^k - \frac{\Lambda^{k+\frac{1}{2}}+L}{\tau r_1},\0\Bigg\} + \max\Bigg\{A^k - \frac{\Lambda^{k+\frac{1}{2}} - U}{\tau r_1},\0\Bigg\}$\;
Compute eigendecomposition $D \diag(\beta) D^T$ of $B^k + \frac{\Lambda^{k+\frac{1}{2}}-\lambda_n I}{\tau r_2}$\;
$B^{k+1} \leftarrow D \diag(\max(\beta,\0))D^T$\;
$\Lambda^{k+1} \leftarrow \Lambda^{k+\frac{1}{2}} + \sigma(A^{k+1} - A^k) - \sigma(B^{k+1} - B^k)$\;
$k=k+1$\;
}
\Return $\hat{\Theta}_n \leftarrow \Theta^k,\hat{A}_n \leftarrow A^k,\hat{B}_n \leftarrow B^k$\;
\end{algorithm}

\section{Application}\label{sec:application}
Note that during our experiments we will fix the values of the ADMM parameters following the practical choices made in the
paper of \cite{Chang2020}, that is, we do not tune these values for speed, we only pick values that guarantee convergence of the method.
Additionally, we will focus on parameters $L,U$ for the Golazo penalty that do not penalize the diagonal, so for our practical
experiments $\diag(L) = \diag(U) = \0$.
\subsection{Simulated Data}
\subsubsection{Gaussian setting}
Taking inspiration from \cite{engelke2024extremal}, we consider a graph $G=(V,E)$ consisting of two disconnected (except for edges going through the hidden variable) 
cycles with $25$ observed nodes each, and one hidden variable. We set $K_{ii}=5$ for all $i\in V = \{1,\ldots,51\}$ 
and $K_{ij}=-2$ for all $1\le i, j \le p=50$ with $ij\in E$, and $K_{ij}=0$ otherwise. The hidden variable is connected
to all of the observed variables, with $K_{ih} = K_{hi} = 5/p$ for all $i\neq h= 51$. 

In this study we compare the standard $\ell_1$-penalty with a positive dependence constraint.
To showcase the flexibility of the Golazo approach, we further include two modified versions of these penalties that incorporate partial graphical model constraints (i.e.~partial sparsity in $K$).
To simplify notation, let us call $O_1 = \{1,\ldots,25\}, O_2 = \{26, \ldots 50\}$, $H = \{51\}$, where
$O_1$ denotes the indices of the nodes of the first cycle, $O_2$ the nodes of the second cycle and $H$ the hidden variable.
The constraints that we are going to test are the following:
\begin{enumerate}
	\item $L_{ij} = -\lambda_n\gamma$ and $U_{ij} = \lambda_n\gamma$ for all $i\neq j$, that is, the standard $\ell_1$-penalty.
	\item $L_{ij} = -\lambda_n\gamma$ and $U_{ij} = \lambda_n\gamma$ for all $i\neq j$ where $i,j$ are both either in $O_1$ or $O_2$.
	For $i,j$ where each node is in a different subcycle, $L_{ij} = -\infty$ and $U_{ij} = \infty$, that is, we assume that $O_1$ or $O_2$ are not connected by an edge.
	\item $L_{ij} = 0$ and $U_{ij} = \infty$ for all $i\neq j$, that is, the \MTPtwo constraint.
	\item $L_{ij} = 0$ and $U_{ij} = \infty$ for all $i\neq j$ where $i,j$ are both either in $O_1$ or $O_2$.
	For $i,j$ where each node is in a different subcycle, $L_{ij} = -\infty$ and $U_{ij} = \infty$, that is, the \MTPtwo constraint with the additional assumption that $O_1$ or $O_2$ are not connected by an edge.
\end{enumerate}

\begin{figure}[H]\centering
	\begin{tikzpicture}[roundnode/.style={circle, draw=black!60, very thick,minimum size=4mm,scale=1}]
		\node[roundnode]    (o1)     {$O_1$};
		\node[roundnode]      (o2)       [right=2cm of o1] {$O_2$};
		\node[roundnode]      (o4)       [below=2cm of o1] {$O_4$};
		\node[roundnode]      (o3)       [right=2cm of o4] {$O_3$};
		\node[roundnode]      (h)       [below right=1cm and 1cm of o2] {$H$};
		\node[roundnode]    (o5)     		[above right=1cm and 1cm of h]				{$O_5$};
		\node[roundnode]      (o6)       [right=2cm of o5] {$O_6$};
		\node[roundnode]      (o8)       [below=2cm of o5] {$O_8$};
		\node[roundnode]      (o7)       [right=2cm of o8] {$O_7$};

		\draw (o1) -- (o2);
		\draw (o2) -- (o3);
		\draw (o3) -- (o4);
		\draw (o4) -- (o1);
		\draw (o5) -- (o6);
		\draw (o6) -- (o7);
		\draw (o7) -- (o8);
		\draw (o8) -- (o5);
		\draw[dotted] (h) -- (o1);
		\draw[dotted] (h) -- (o2);
		\draw[dotted] (h) -- (o3);
		\draw[dotted] (h) -- (o4);
		\draw[dotted] (h) -- (o5);
		\draw[dotted] (h) -- (o6);
		\draw[dotted] (h) -- (o7);
		\draw[dotted] (h) -- (o8);
	\end{tikzpicture}
	\caption{Two disconnected $4$-cycles with one hidden variable.}\label{fig:simulated}
\end{figure}

\begin{figure}[H]
	\centering
	\includegraphics[scale=0.6]{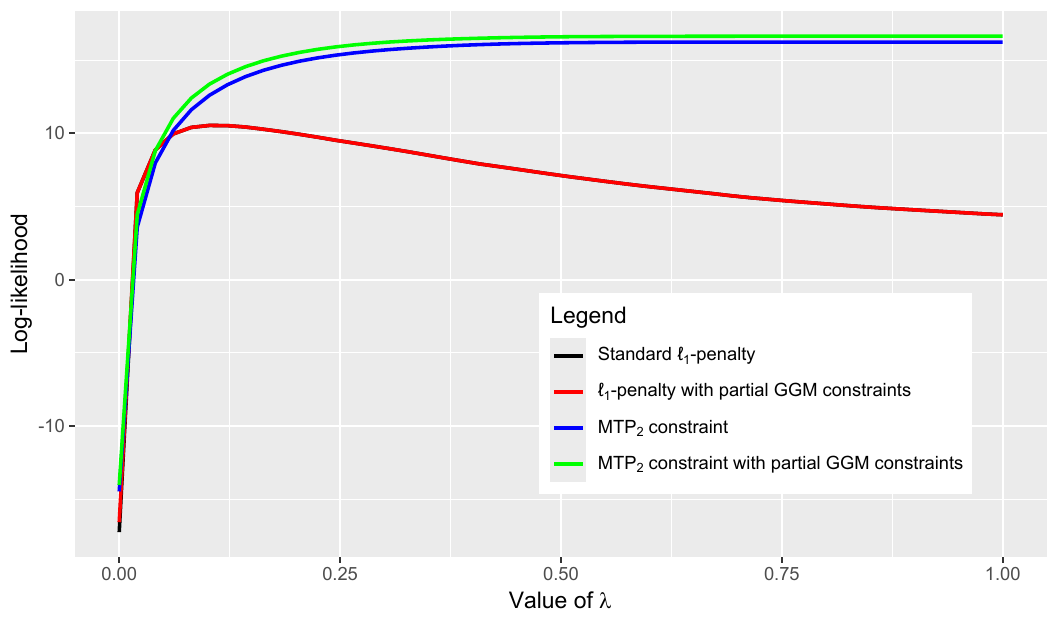}
	\caption{Results for the two cycles (red and black line become equal).}\label{fig:twocycles}
\end{figure}

We generate two samples of size $n=100$ in $N=20$ different trials. We train the model using the first sample and then 
evaluate the Gaussian log-likelihood on the second one. This could also be done using the ground truth covariance.
We fix $\gamma = 0.5$ for the constraints 1~and 2, after testing various values and noticing that the overall behavior is stable for a range of values of $\gamma$ (compare also the discussions in \cite{Chandrasekaran,CSPW2011,engelke2024extremal}).

Note that constraints 3~and 4~are independent of $\gamma$.
We select values for $\lambda_n$ from $10^{-8}$ to $1$, with $50$ values evaluated in total.
We perform the simulation, calculation and validation steps for each constraint and value of $\lambda_n$ and compute an average of the log-likelihoods over the different trials. 
Figure~\ref{fig:twocycles} visualizes the results of this study.
We observe that the \MTPtwo constraints provide a robust method that outperform the $\ell_1$-penalty. Furthermore, there is a small improvement when the partial graphical model constraints are added. 

\subsubsection{\HR{} setting}
In this simulation we will use some of the ideas from \citet{engelke2024extremal}. We consider the following graph structure.
The dependence graph between the observed variables is a cycle, we further connect each observed variable to one of the hidden variables.
The way to do so is to connect each hidden node $h \in H$ to all nodes $o \in O$ such that $o = h - (p+1) + \xi h$, for some positive integer $\xi$.
The weight between the
observed variables that are adjacent is $2$, and the weight between an observed variable and the hidden variable connected to
it is sampled uniformly in the interval $[50/\sqrt{p/h},75/\sqrt{p/h}]$. We will study three models generated using 
the function generate\_latent\_model\_cycle from \cite{engelke2024extremal}, with $n=10$, $N=10$, $p=30$ and $h=3,5,10$, to see the different behavior
depending on the number of hidden variables. We fix $\gamma = 0.25$ and $\lambda_n$ from $10^{-4}$ to $0.16$. The results can be
seen in \Cref{fig:hrlogli,fig:hredges,fig:hrranks}. The first column
shows the results for $h=3$, the second shows the results for $h=5$ and the third one for $h=10$.
We show the results for three different Golazo constraints:
\begin{enumerate}
	\item $L_{ij} = -\lambda_n\gamma$ and $U_{ij} = \lambda_n\gamma$ for all $i\neq j$, that is, the standard $\ell_1$-penalty.
	\item $L_{ij} = -\lambda_n\gamma$ and $U_{ij} = \infty$ for all $i\neq j$, that is, a modified $\ell_1$-penalty with the \EMTPtwo constraint.
	\item $L_{ij} = 0$ and $U_{ij} = \infty$ for all $i\neq j$, that is, the \EMTPtwo constraint.
\end{enumerate}

\begin{figure}[htbp]
	\centering
	\begin{minipage}{0.33\textwidth}
	  \centering
	  \includegraphics[width=\linewidth]{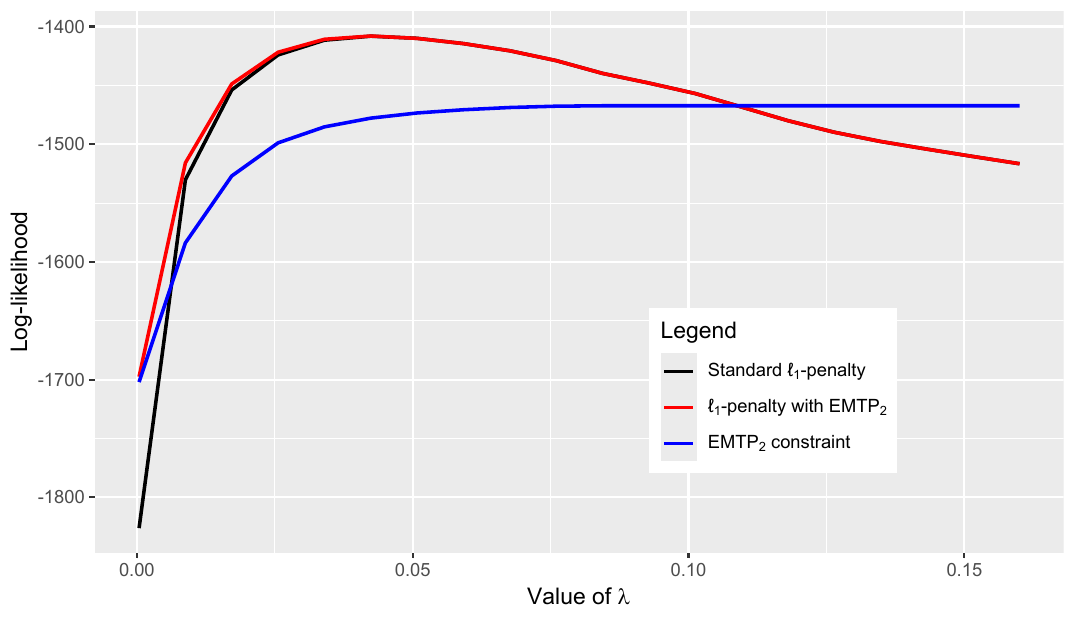}
	\end{minipage}\hfill
	\begin{minipage}{0.33\textwidth}
	  \centering
	  \includegraphics[width=\linewidth]{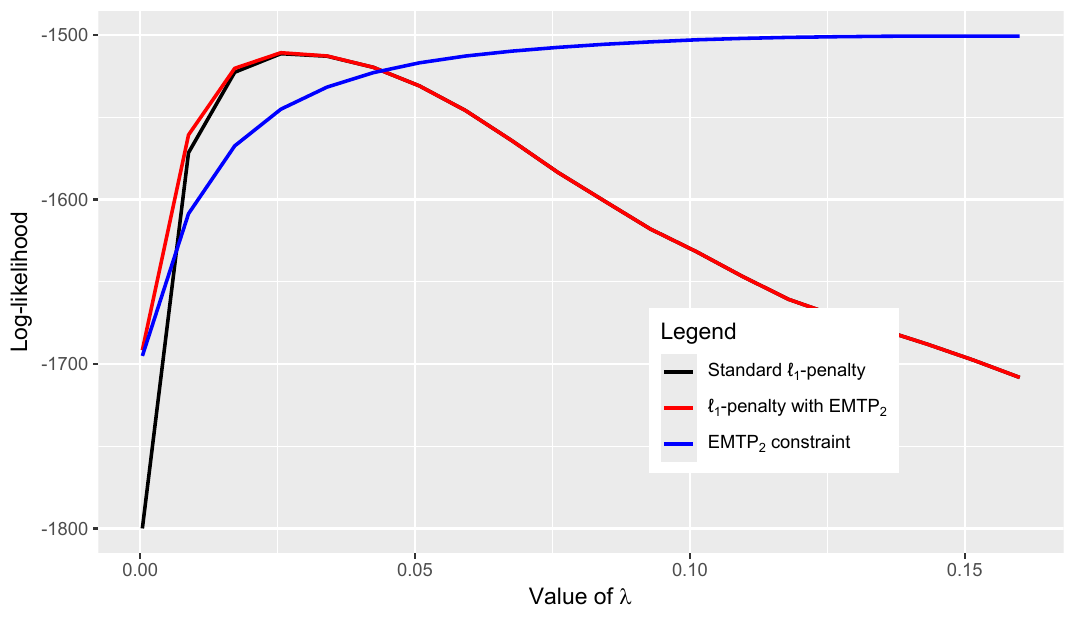}
	\end{minipage}\hfill
	\begin{minipage}{0.33\textwidth}
	  \centering
	  \includegraphics[width=\linewidth]{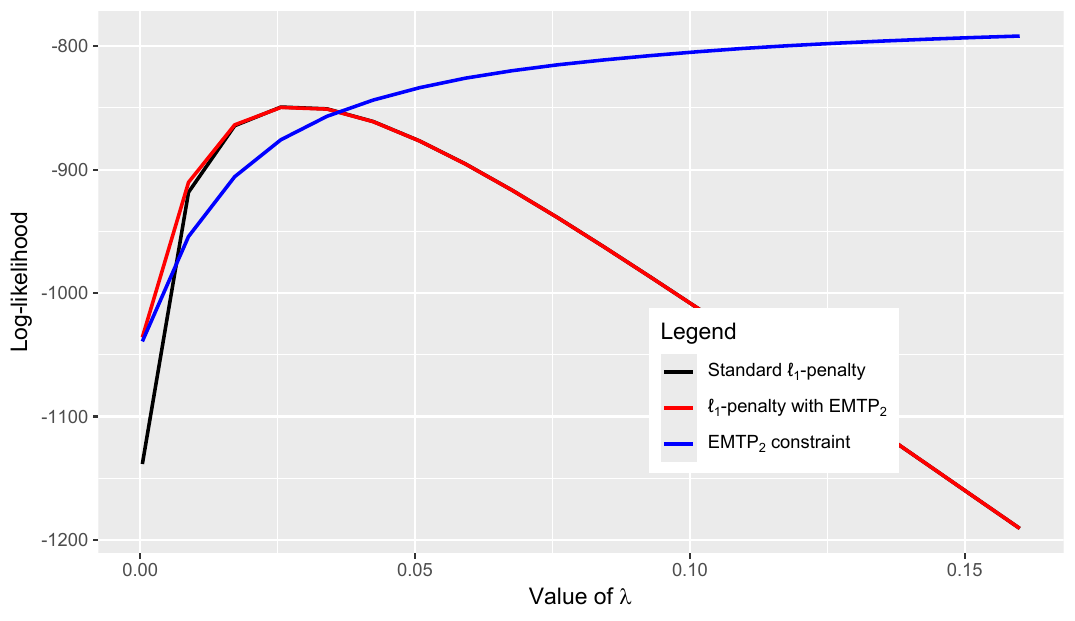}
	\end{minipage}
	\caption{Average validation Hüsler--Reiss log-likelihood}\label{fig:hrlogli}
  \end{figure}

\begin{figure}[htbp]
\centering
\begin{minipage}{0.33\textwidth}
	\centering
	\includegraphics[width=\linewidth]{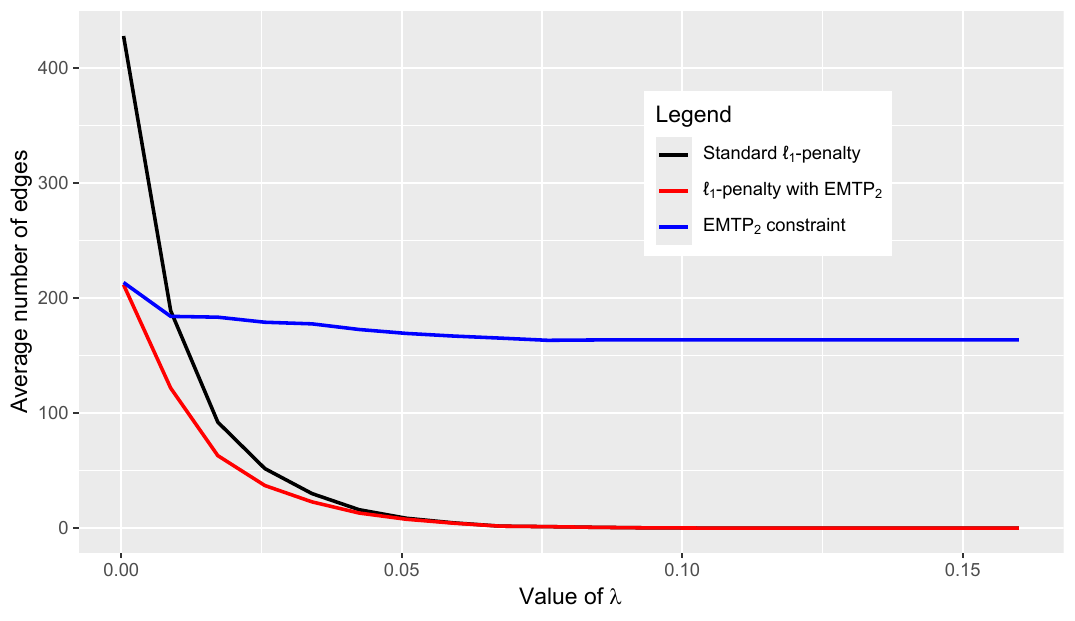}
\end{minipage}\hfill
\begin{minipage}{0.33\textwidth}
	\centering
	\includegraphics[width=\linewidth]{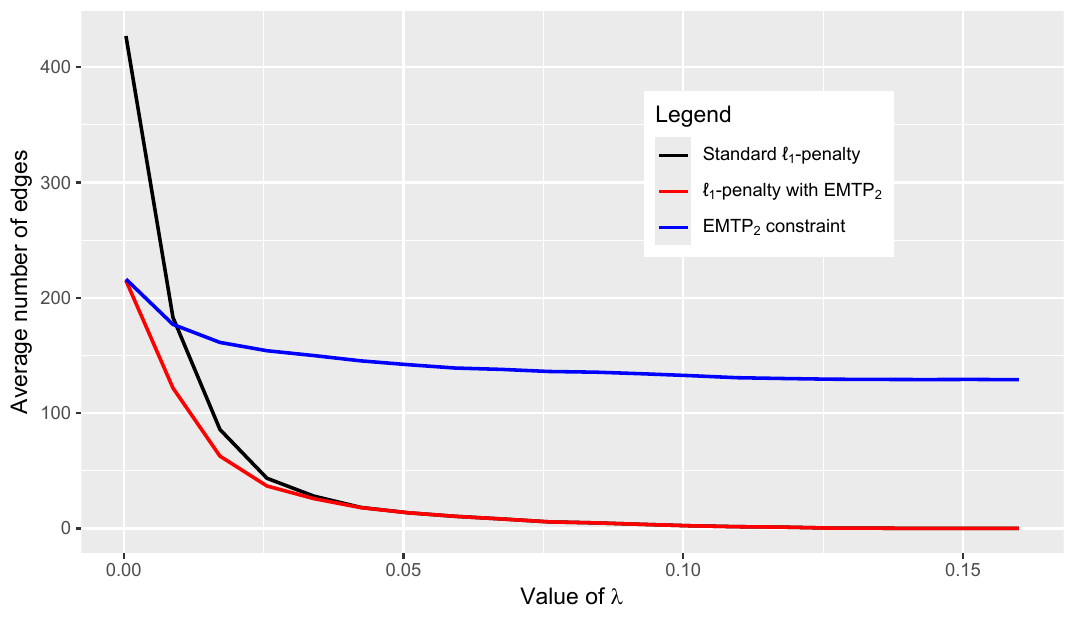}
\end{minipage}\hfill
\begin{minipage}{0.33\textwidth}
	\centering
	\includegraphics[width=\linewidth]{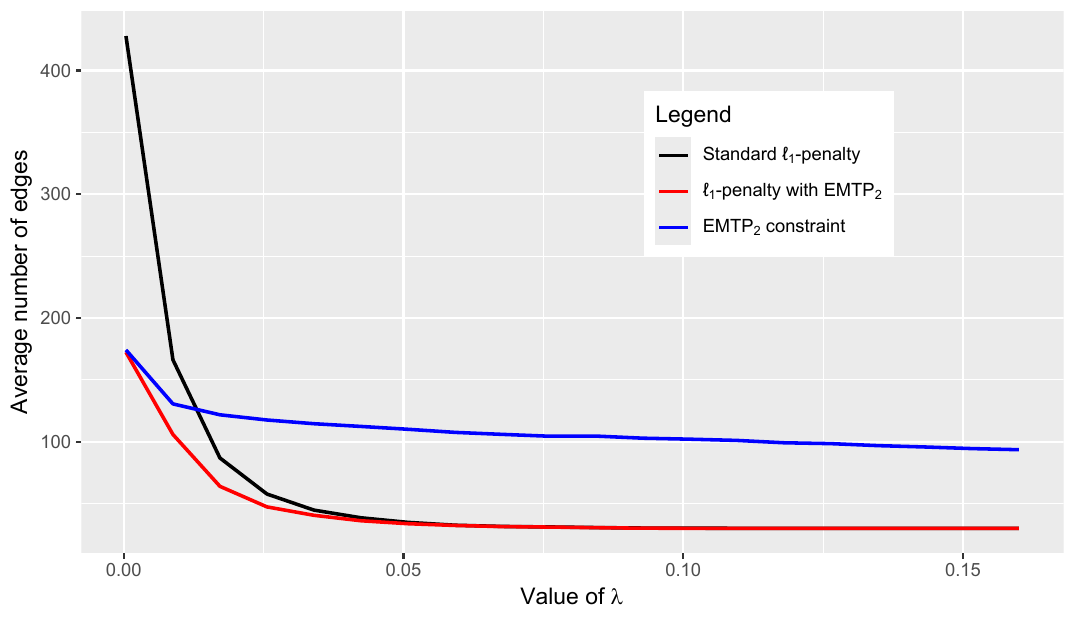}
\end{minipage}
\caption{Average estimated edges}\label{fig:hredges}
\end{figure}

\begin{figure}[htbp]
	\centering
	\begin{minipage}{0.33\textwidth}
	  \centering
	  \includegraphics[width=\linewidth]{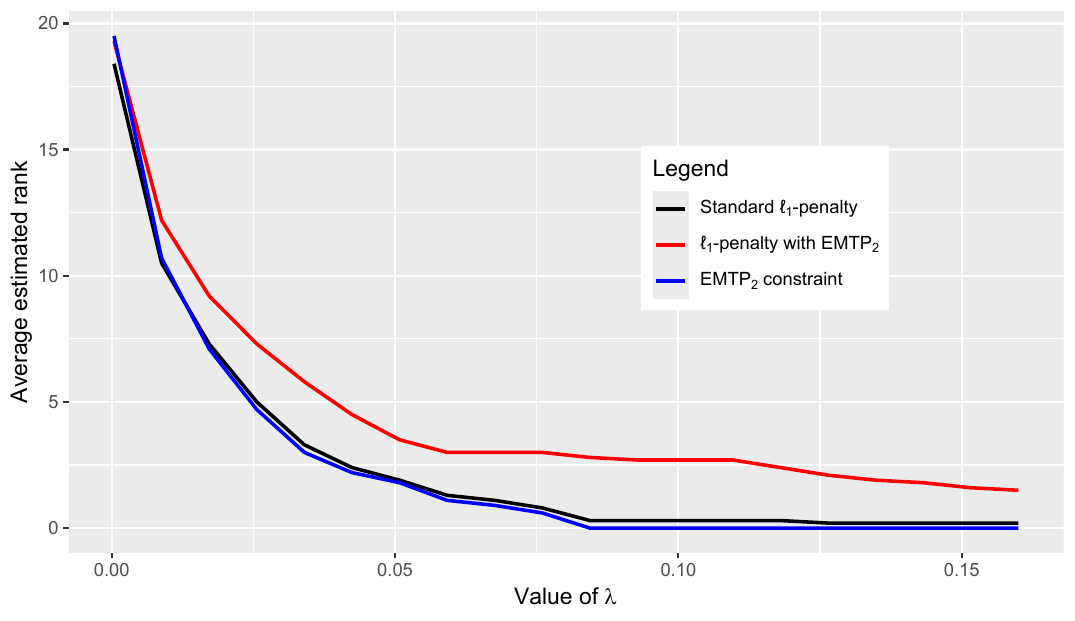}
	\end{minipage}\hfill
	\begin{minipage}{0.33\textwidth}
	  \centering
	  \includegraphics[width=\linewidth]{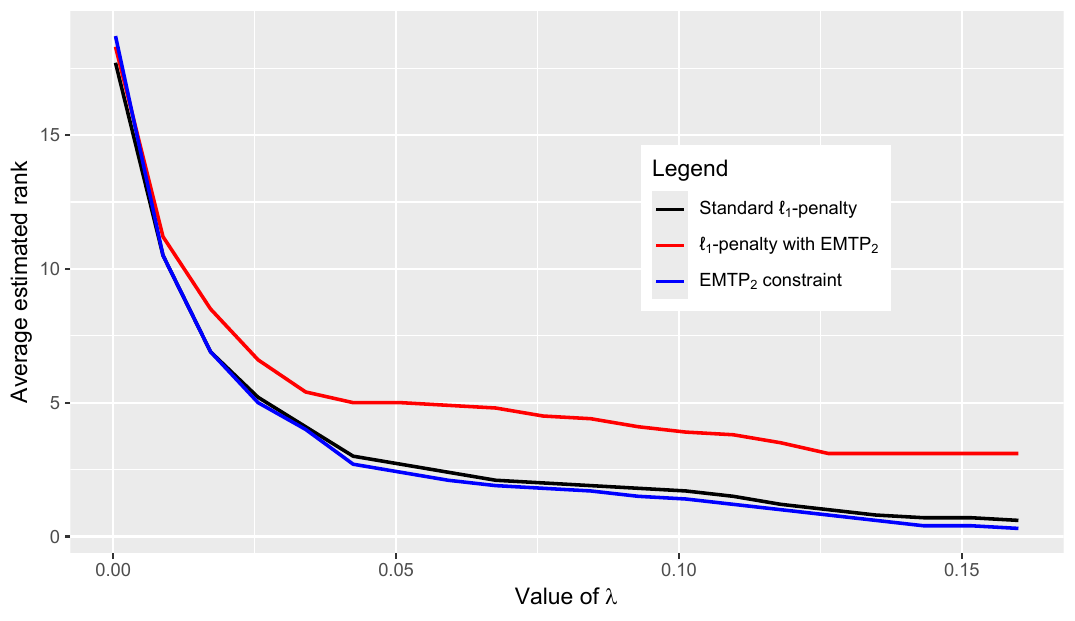}
	\end{minipage}\hfill
	\begin{minipage}{0.33\textwidth}
	  \centering
	  \includegraphics[width=\linewidth]{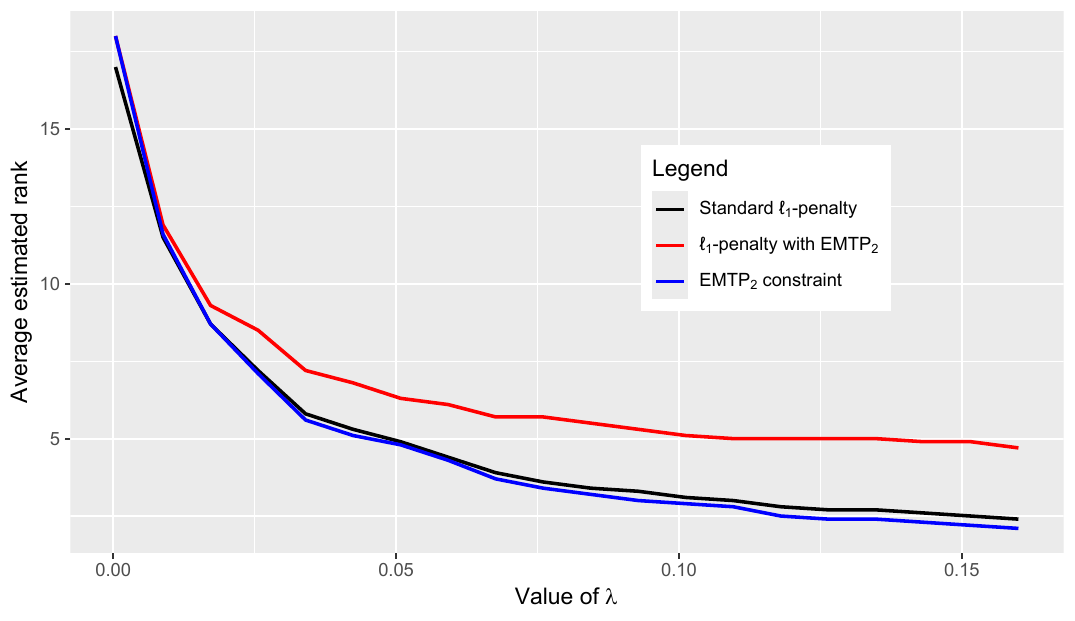}
	\end{minipage}
	\caption{Average estimated ranks}\label{fig:hrranks}
  \end{figure}

We see that the behavior in the different measurements depends of many factors. Regarding the validation log-likelihood,
we see that with $h=3$ the lasso-based methods perform best. However, the more hidden variables
we consider, the better are the results that the \EMTPtwo constraint obtains. Taking a look into the estimation of the number
of edges, we see however that the positivity constraint is not very sparse compared to the other ones,
which is one of the clear disadvantages when considering the tradeoff between model performance and computational
efficiency. Finally, when looking at the estimated ranks, we see that the estimated rank near the optimum
of the combination of lasso with \EMTPtwo seems to be quite close on average to the actual number of
hidden variables. This highlights the different behaviors observed between different Golazo constraints.

\subsection{Real-world Data}
\subsubsection{Standard Gaussian setting}
For this real-world data application we will use gene data from the Rosetta dataset (see \cite{Hughes2000} for the original source), which has $301$ samples
from $6316$ variables. We obtained the dataset from the code of \cite{Chang2020}. The way to process this data to 
obtain a sample covariance matrix (which is the data input to our algorithm) is described in \cite{MXZ2013}. 
Here, the idea is to compute the sample variances of each variable, and then pick the $p$ variables with the
largest sample variance, resulting in $p=25$ observed variables for the latent Gaussian graphical model.

During these experiments, we fix $\gamma = 0.1$, after testing various values and seeing that this one gave near optimal
result for the lasso-based methods. We select it in this way since the positivity-based methods optimal performance is not affected by this parameter.
Then we explore how the behavior of the estimates depend on the value
of $\lambda_n$ and the type of Golazo constraint selected. We select a large enough interval for $\lambda_n$ so that the general behavior
of each constraint can be appreciated. Here $\lambda_n$ takes values from $10^{-8}$ to $0.4$, with $30$ values evaluated in total.

We use $5$-fold cross-validation to evaluate how well each
of the methods generalizes better, and we will use as the score the log-likelihood with respect to the validation set.
We show the results for four different Golazo constraints:
\begin{enumerate}
	\item $L_{ij} = -\lambda_n\gamma$ and $U_{ij} = \lambda_n\gamma$ for all $i\neq j$, that is, the standard $\ell_1$-penalty.
	\item $L_{ij} = -\lambda_n\gamma$ and $U_{ij} = \infty$ for all $i\neq j$, that is, a modified $\ell_1$-penalty.
	\item $L_{ij} = 0$ and $U_{ij} = \infty$ for all $i\neq j$, that is, the \MTPtwo constraint.
	\item $L_{ij} = 0$ and $U_{ij} = \lambda_n\gamma$ for all $i\neq j$, that is, the positive lasso constraint.
\end{enumerate}
We can see in Figure~\ref{fig:gene} that the best overall validation log-likelihood occurs when using constraint 2, which shows that combining \MTPtwo and an $\ell_1$-penalty can yield improved performance over either of them. We see as in the simulation study that the \MTPtwo constraint
seems to be relatively robust with respect to the choice of $\lambda_n$ and performs comparably well, although not optimal in this case.
\begin{figure}
	\centering
	\includegraphics[scale=0.6]{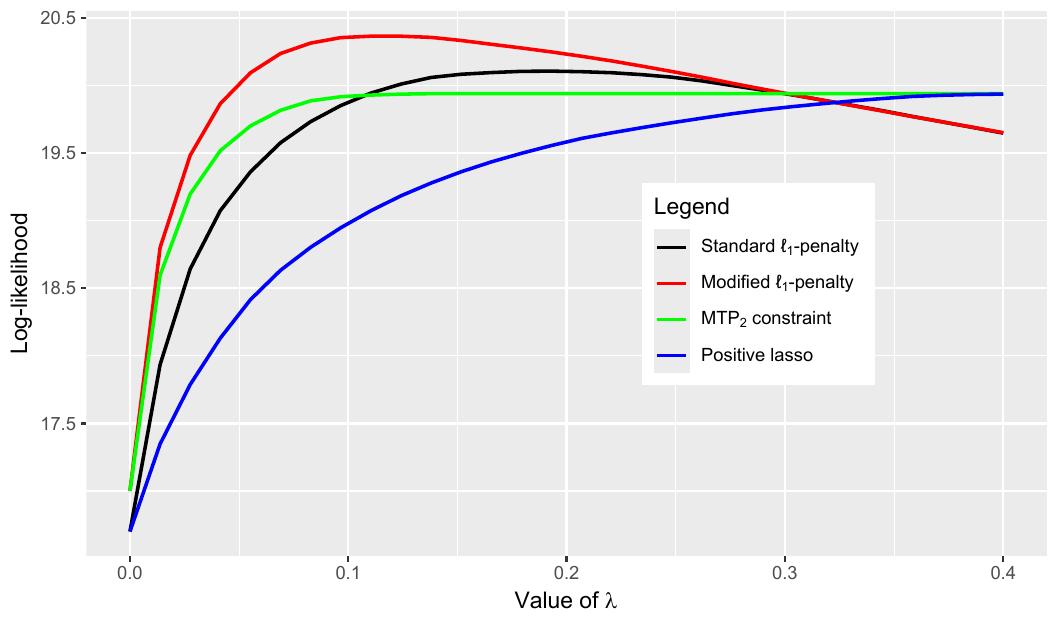}
	\caption{Results for the gene data (red and black line become equal).}\label{fig:gene}
\end{figure}

\subsubsection{\HR{} setting}
For our real-world data application in the extreme case, we choose to tackle the same problem as \citet{engelke2024extremal}.
We thank the authors for providing their code, which we reuse in our simulations and plots.
The dataset contains $n = 3603$ observations of total daily flight delays between 2005 and 2020 in $p = 29$ airports located in the south of the United States. This is an interesting problem to consider for our method, since extreme flight delays
can be caused by a wide variety of external factors. 
Furthermore, the setting is interesting for extreme value statistics as the impact of large flight delays is 
considerably larger than in the case of shorter flight delays. Secondly, we expect that the dependence structure
among the extremes should not be the same as for regular delays. 

The data input to our algorithm, as mentioned in \Cref{sec:admm}, is $-\overline{\Gamma}_{OO}/2$,
where $\overline{\Gamma}_{OO}$ is the empirical variogram of \citet{engelkevolgushev}.
In the approach of threshold exceedances as in \eqref{eq:MPD_limit}, a vector is considered extreme if its largest value exceeds a high threshold.
As the exceedance can be located in any entry of the observed vector, the approach of Engelke and Volgushev is to check this separately in every dimension.
Given observations $\X_1,\ldots,\X_n$, the $m$-th empirical variogram is defined as
$$\overline{\Gamma}^{(m)}_{ij} = \widehat{\textrm{Var}}(\log(1-\tilde{F_i}(X_{ti})) - \log(1-\tilde{F_j}(X_{tj})) : \tilde{F_m}(X_{tm}) \geq 1-k/n).$$
Here, $0\le k\le n$ is an integer that determines the effective sample size, $\widehat{\textrm{Var}}$ is the sample variance, and $\tilde{F_i}$ is the empirical distribution function for the dimension $i$.
Under the assumption that the data-generating process $\X$ is in the domain of attraction of a \HR{} vector $\Y$, the population version
\[\Gamma_{ij}^{(m)}=\operatorname{Var}(Y_i-Y_j|Y_m>1)\]
is constant with respect to $m$, that is~$\Gamma^{(1)}=\ldots\Gamma^{(m)}=\Gamma$.
Thus, averaging over all dimensions leads to the empirical variogram estimator
$$\overline{\Gamma} = \frac{1}{p}\sum_{m \in V}\overline{\Gamma}^{(m)}.$$
In this way, the effective sample size for estimating each of the $\overline{\Gamma}^{(m)}$ is
$k$, and the effective sample size for the full estimate depends on the dependence
structure in each case, since some data points might be extreme in many dimensions
(therefore being used multiple times, reducing the effective sample size), or the complete
opposite, where data points are only extreme in one dimension if any (causing an 
effective sample size close to $pk$).

To determine the effective sample size we typically select a value $a \in [0,1)$,
which marks the quantile threshold for our data to be considered extreme. Then the effective sample size is $k = (1-a)n$. 
In \citet{engelke2024extremal}, the authors discuss the cases $a\in\{0.85,0.9,0.95\}$, and find similar results in all settings.
As the largest value leads to the most extreme data set with the smallest effective sample size, we pick $a=0.95$ for the remainder of this section.

The parameter $\lambda_n$ ranges from $10^{-10}$ to $0.4$, with $10$ different values being evaluated.
As in the Gaussian case above, we fix the value of $\gamma = 0.25$. Please note that in the current version of the code we use a different
notation for some of the parameters. 

We show the results for three different Golazo constraints:
\begin{enumerate}
	\item $L_{ij} = -\lambda_n\gamma$ and $U_{ij} = \lambda_n\gamma$ for all $i\neq j$, that is, the standard $\ell_1$-penalty.
	\item $L_{ij} = -\lambda_n\gamma$ and $U_{ij} = \infty$ for all $i\neq j$, that is, a modified $\ell_1$-penalty with the \EMTPtwo constraint.
	\item $L_{ij} = 0$ and $U_{ij} = \infty$ for all $i\neq j$, that is, the \EMTPtwo constraint.
\end{enumerate}
For this experiment, we do $5$-fold cross-validation, and we report the average \HR{} log-likelihoods (\Cref{fig:flightloglikeli}) and number
of edges (\Cref{fig:flightedges}) for each of the Golazo constraints along all the evaluated values of $\lambda_n$.

\begin{figure}
	\centering
	\includegraphics[scale=0.6]{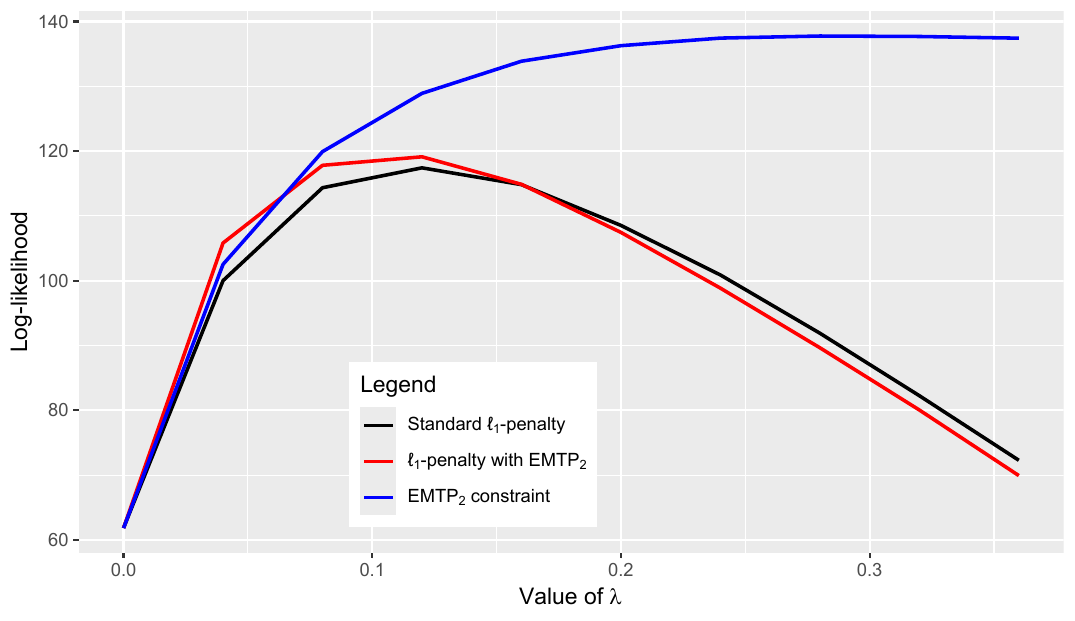}
	\caption{Results in average cross-validation \HR{} log-likelihood for the flights data.}\label{fig:flightloglikeli}
\end{figure}

\begin{figure}
	\centering
	\includegraphics[scale=0.6]{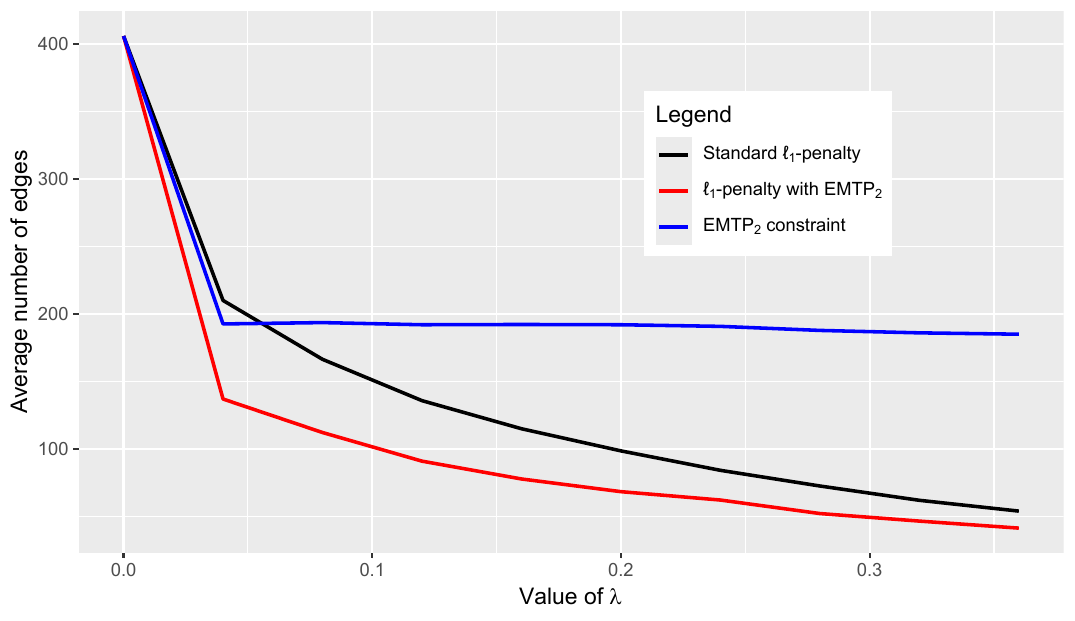}
	\caption{Results in average number of edges in cross-validation for the flights data}\label{fig:flightedges}
\end{figure}

As we can see, the value of the validation \HR{} log-likelihood under positivity constraints is higher than the purely lasso-based methods, and we note that the best fit for the modified
$\ell_1$ penalty behaves better than the standard lasso method. 

We see that the number of edges remains quite constant for the
positivity constraint, except in the first data point. The first data point corresponds
to a value of $\lambda_n$ that is almost equal to $0$ (to guarantee the stability
of the algorithm under all constraints, we use only positive values). 
This estimate is different from the one that would be obtained with the original 
algorithm from \citet{piotrsteffen}. In their case there is no trace penalty and there is only one block of variables (instead of $2$ like in our case). This causes our algorithm
to behave in a more unstable way for smaller values of $\lambda_n$. For computing such problems 
without trace penalty and using only one block of variables, their paper
provides R code.

In \Cref{fig:maps} we can see the original flight graph (where edges denote direct flight connections), along with the maps estimated
when using our method with different Golazo penalties. For these maps, we pick values for the parameters that perform
optimally in validation.

\begin{figure}[htbp]
	\centering
	\begin{minipage}{0.45\textwidth}
	  \centering
	  \includegraphics[width=\linewidth]{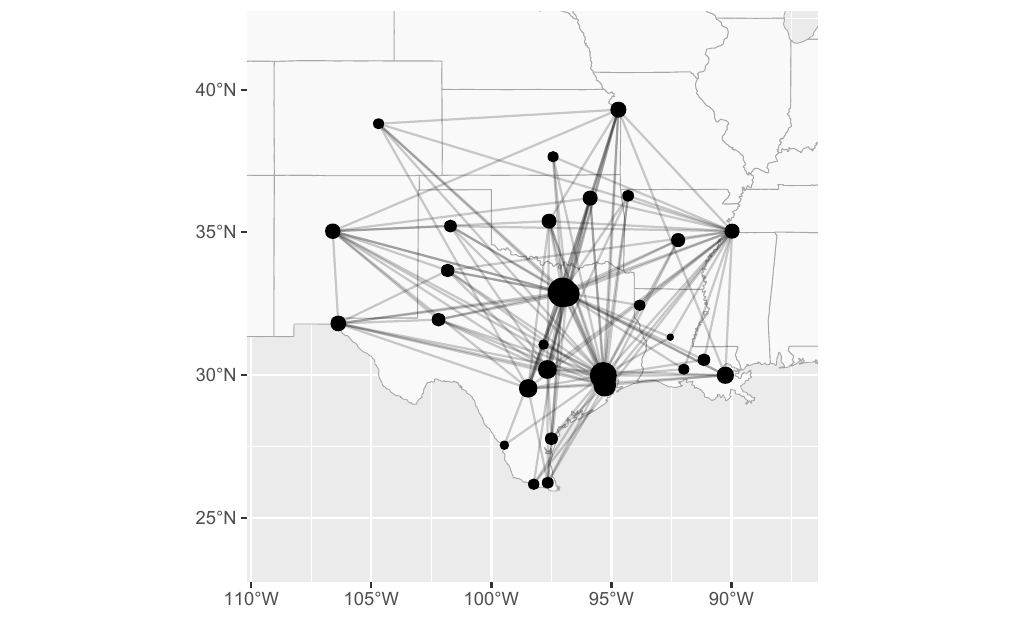}
	  \caption*{{(A)} Map with flight connections}
	\end{minipage}\hfill
	\begin{minipage}{0.45\textwidth}
	  \centering
	  \includegraphics[width=\linewidth]{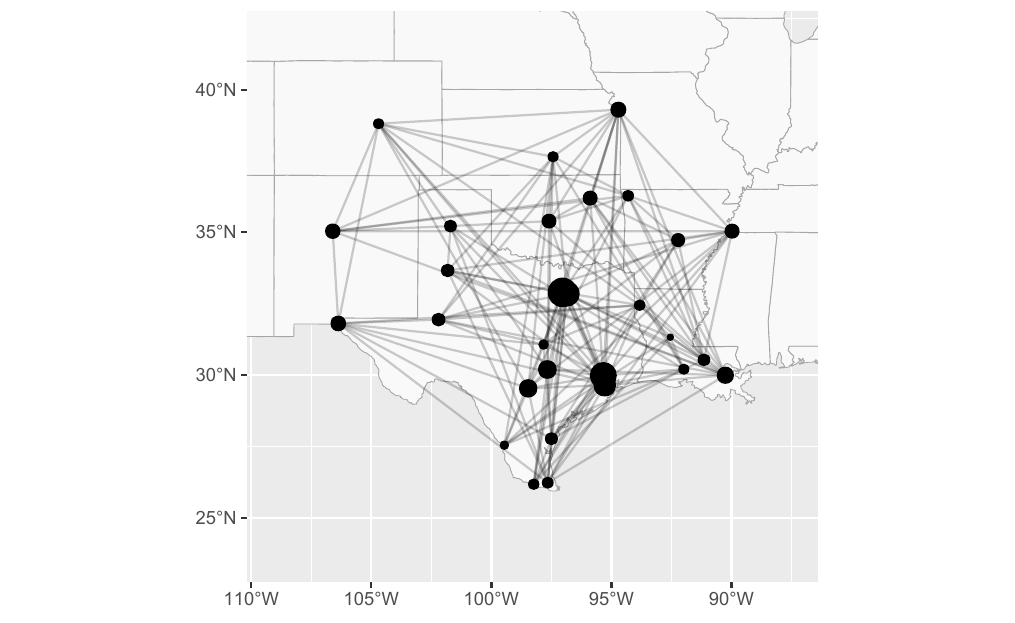}
	  \caption*{{(B)} Map for the standard $\ell_1$ penalty}
	\end{minipage}

	\vskip 0.5cm

	\begin{minipage}{0.45\textwidth}
	  \centering
	  \includegraphics[width=\linewidth]{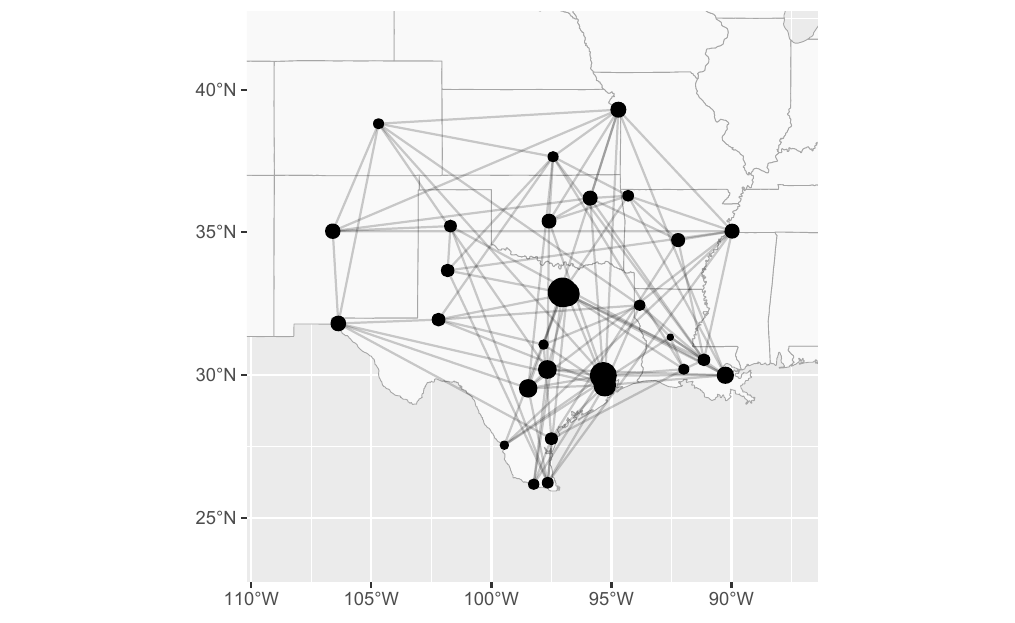}
	  \caption*{{(C)} Map for the $\ell_1$ penalty combined with the \EMTPtwo}
	\end{minipage}\hfill
	\begin{minipage}{0.45\textwidth}
		\centering
		\includegraphics[width=\linewidth]{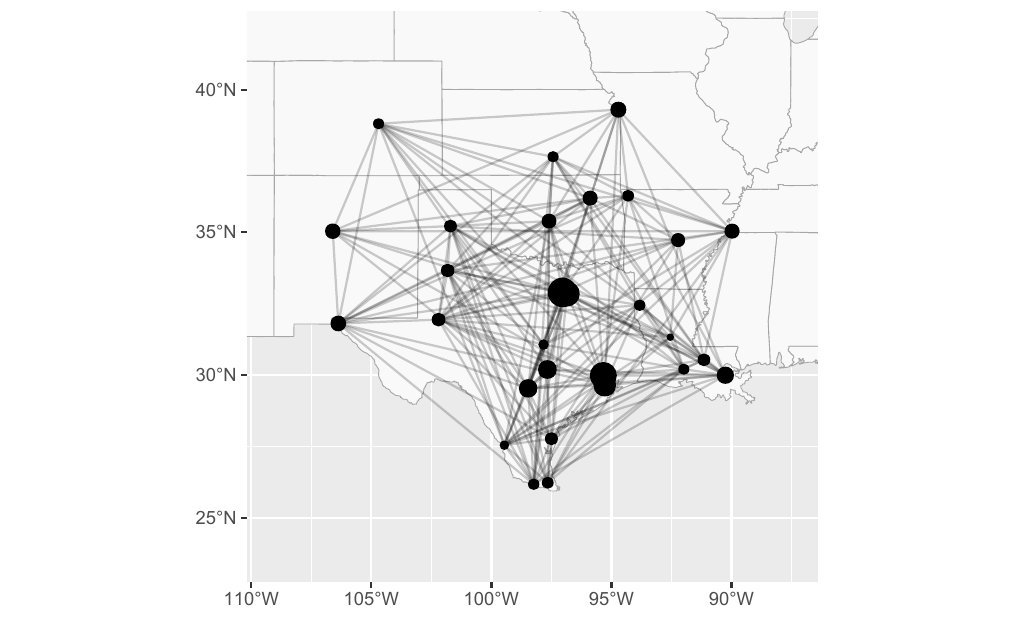}
		\caption*{{(D)} Map for \EMTPtwo constraint}
	  \end{minipage}
	\caption{Estimated maps for the different methods}\label{fig:maps}
  \end{figure}

\section{Discussion}
In this paper we propose generalized latent Gaussian graphical and \HR{} model learning via the Golazo penalty function.
We provide an ADMM algorithm that we apply to simulated and real data, and discuss various flexible penalization choices in comparison to the standard $\ell_1$-penalty.
In particular, the robustness of the \MTPtwo (and $ \text{EMTP}_2$) constraint  with respect to the hyperparameters provides an attractive alternative to settings when hyperparameter 
tuning is not possible (for instance, when training is too expensive). 
For future research, a main question beyond the scope of this paper is studying theoretical properties
and guarantees for the proposed methods, since the behavior depending on the Golazo penalty seems
like a promising research topic given what can be seen in experiments, where the results 
change a lot depending on the case.

Furthermore, one could explore whether some kind of ensemble of such estimators can improve 
performance over one estimator alone. This would be an
interesting practical improvement, since if a model is trained over multiple hyperparameters to obtain an optimal choice,
then suboptimal models could still be used as part of such an ensemble.
We would also like to consider in the future if performing a refit after a first fit of our model
can improve performance. This is something discussed in Appendix~I from \citet{engelke2024extremal}.
The idea behind this procedure is to, after obtaining a model with the procedure described in their paper,
fit a model optimizing the loglikelihood, using the previous estimate to constrain the sparsity
pattern and the column space of the hidden component. We believe it would be interesting to see
how useful this is depending on the setting, given that the variety of problems and constraints
suggests a large number of different possibilities.

\section{Acknowledgments}
We thank Sebastian Engelke and Armeen Taeb for their ideas and code, which have been very useful for the development of
this paper.

\bibliographystyle{chicago}
\bibliography{references}

\end{document}